\documentclass[pre,amsmath,amssymb,aps,superscriptaddress,flushbottom]{revtex4-2}

\usepackage[utf8]{inputenc}
\usepackage[T1]{fontenc}
\usepackage{graphicx}
\usepackage{float}
\usepackage{color}
\usepackage{xr}
\usepackage{bm}
\usepackage{amsmath}
\usepackage{amsthm}
\usepackage{enumerate}
\usepackage{enumitem}
\usepackage{hhline}
\usepackage{subcaption}
\usepackage{mathrsfs}
\newtheorem{theorem}{Theorem}
\newtheorem{lemma}[theorem]{Lemma}
\theoremstyle{definition}
\newtheorem{definition}{Definition}[section]
\newcommand\myeq{\mathrel{\overset{\makebox[0pt]{\mbox{\normalfont\tiny\sffamily fit}}}{=}}}
\begin{document}
\title{Wave Kinetics and Thermalization in Kadomtsev-Petviashvili-I System}
\author{Kolluru Venkata Kiran} 
\email{kiran8147@gmail.com}
\affiliation{Universit\`e C\^ote d’Azur, CNRS-Institut de Physique de Nice, 17 Rue Laupr\^etre, Nice, 06200, France}
\author{Giorgio Krstulovic} 
\email{giorgio.krstulovic@univ-cotedazur.fr}
\affiliation{Universit\`e C\^ote d’Azur, CNRS-Institut de Physique de Nice, 17 Rue Laupr\^etre, Nice, 06200, France}

\author{Alan.C.Newell} 
\email{anewell@arizona.edu}
\affiliation{Mathematics Department, University of Arizona,, 617 N Santa Rita Ave , AZ, Tucson, 85721, United States}
\author{Sergey Nazarenko} 
\email{sergey.nazarenko@univ-cotedazur.fr}
\affiliation{Universit\`e C\^ote d’Azur, CNRS-Institut de Physique de Nice, 17 Rue Laupr\^etre, Nice, 06200, France}

\begin{abstract}{
}
We study properties of solutions, both evolving and equilibrium of the wave-kinetic equation describing ensembles of weak random waves governed by the Kadomstev-Petviashivli-I equations. The latter equation is integrable by the inverse scattering method, and yet it allows resonant wave interactions leading to redistribution of energy in the Fourier space. Such resonant interactions preserve an infinite number of invariants and we find that they preserve compactness of Fourier space supports. Numerically, we observe that the system can 
thermalize to one of the equilibrium states of Rayleigh-Jeans type, despite the common empirical belief that thermalization is impossible for integrable systems. The thermalized states are formed via non-local spectral transfers leading to creation of strong low-wavenumber peaks of the wave spectrum -- a process akin to Bose-Einstein condensation.
\end{abstract}
\maketitle
\section{Introduction}
Integrability and thermalization are generally perceived as two contrasting paradigms for understanding nonlinear dynamics. Thermalization describes the tendency of a system to evolve toward a state in which its macroscopic properties are captured by equilibrium statistical ensembles, such as the microcanonical or canonical ensembles. In classical systems, nonlinear interactions and chaotic dynamics promote mixing in relevant phase space, thereby facilitating thermalization.
In contrast, integrable systems exhibit quasi-periodic and non-ergodic behavior due to the presence of additional conserved quantities, equal in number to the system’s degrees of freedom. This distinction is well illustrated by the classical and historically significant \emph{Fermi–Pasta–Ulam–Tsingou} (FPUT) chain. The model consists of a one-dimensional array of identical masses connected by nonlinear springs, interacting only with their nearest neighbors. It was originally devised by Fermi and collaborators to investigate a question raised three decades earlier by Debye. Debye had argued that, since heat is a form of vibrational motion, it should propagate with infinite conductivity, much like a wave, rather than exhibiting finite conductivity. He further proposed that incorporating nonlinearities might resolve this issue by inducing mode mixing and thereby leading to thermalization, with the associated relaxation time providing a measure of finite conductivity. Contrary to the initial expectation that nonlinear interactions would drive the system toward thermal equilibrium from arbitrary initial conditions, numerical experiments revealed a far more intricate dynamical behavior characterized by quasi-periodicity. Moreover, any relaxation toward thermal equilibrium—if it occurs at all—takes place on timescales far longer than those suggested by the inverse strength of the nonlinearity~\cite{fermi1955studies,newell1985solitons,berman2005fermi,gallavotti2007fermi,dauxois2008fermi}.

Motivated by these findings, Zabusky and Kruskal~\cite{zabusky1965interaction} revisited the FPUT problem from a continuum perspective, deriving a nonlinear dispersive partial differential equation (PDE) that can be mapped, in an appropriate asymptotic regime, to the Korteweg–de Vries (KdV) equation. They showed that the KdV equation admits soliton solutions—spatially localized, stable traveling waves whose interactions are elastic and result only in phase shifts, offering a plausible justification for the quasiperiodic behaviour of FPUT chains.  Consequently, Gardner, Greene, Kruskal, and Miura introduced the Inverse Scattering Transform (IST) method~\cite{gardner1967method}, which recasts the nonlinear evolution into a linear spectral problem for an associated Lax operator. Within this framework, they showed that the solitons are exact closed form solutions of the KdV equations asymptotically appearing from a broad class of initial conditions. 
%\textcolor{red}{It is important to note, however, that the integrability of the system holds only within the asymptotic regime considered by Kruskal and collaborators. At higher orders in amplitude, the effective continuum description of the FPUT chain develops nontrivial resonances—appearing, for instance, at sixth order—which enable energy transfer and give rise to a finite flux spectrum~\cite{onorato2015route, lvov2018double}.}\\ 

%\Secondpa
The IST formalism was subsequently extended to a large class of nonlinear dispersive PDEs. Hallmarks of integrability, such as the existence of infinitely many conserved quantities and soliton solutions, led to the widespread perception that integrable systems cannot thermalize and, more generally, do not warrant a statistical description comparable to that of chaotic nonlinear systems.  This raises a fundamental question: can integrable nonlinear dispersive PDEs exhibit sufficiently complex behavior to warrant a statistical description, and if so, what is the appropriate theoretical framework?

The answer to the latter question is more accessible. The natural statistical framework for describing the emergent, effectively chaotic behavior of nonlinear dispersive PDEs is wave turbulence theory (WTT)~\cite{zakharov2012kolmogorov,nazarenko2011wave,newell2011wave}. In this approach, the physical system is viewed as an ensemble of weakly interacting dispersive waves. 
Under the mild assumption that spatial correlations in the initial state decay sufficiently rapidly, the phases of the Fourier amplitudes can be treated as effectively random. This leads to a natural asymptotic closure for the hierarchy of Fourier-space cumulants, allowing one to derive a closed wave-kinetic equation (WKE) governing the evolution of the spectral number density. The higher-order cumulants contribute corrections that can be interpreted as renormalizations of the wave frequencies.\\
%For  wave fields with short correlations in the physical space (consequently, random initial wave phases and amplitudes),     one can derive a wave-kinetic equation (WKE) governing the evolution of wave action spectrum.
%
Later, Zakharov~\cite{zakharov2009turbulence} conjectured that integrable nonlinear dispersive PDEs can be divided into two distinct classes: strongly integrable and weakly integrable systems. While both classes are integrable via the IST, strongly integrable systems possess a \emph{complete} set of invariants, rendering a kinetic description irrelevant. In contrast, weakly integrable systems, despite also admitting infinitely many conserved quantities, do not have a complete set, thereby allowing for the derivation of a nontrivial WKE.
\begin{figure}[!]
\centering
\includegraphics[width=
6cm
%\linewidth
]{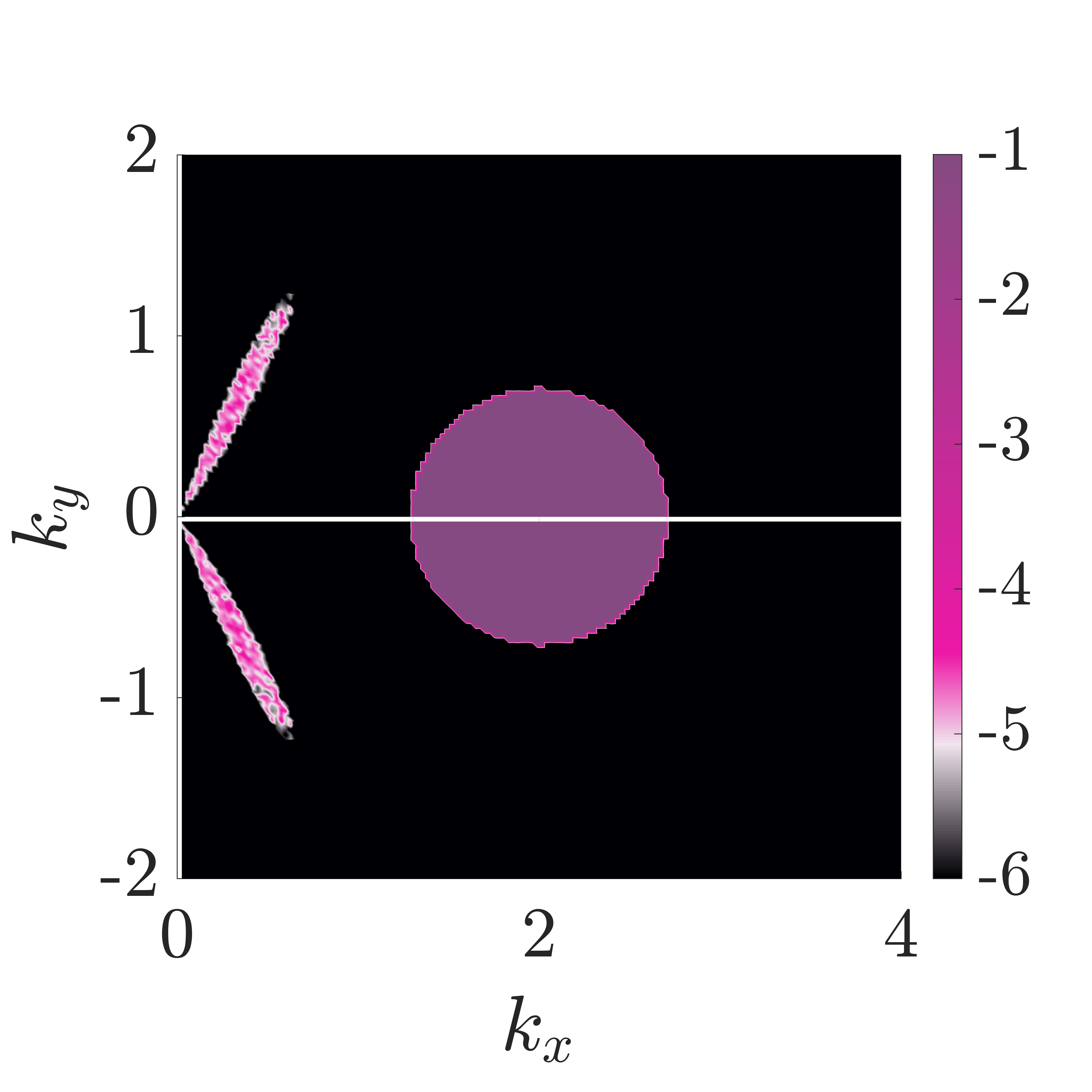}
\caption{Logarithmic color plot of the anisotropic Fourier amplitude spectrum $\vert \hat{u}(\bm k)\vert$ after $\sim\mathcal{O}(10^3)$ linear timescales corresponding to the mode $\bm k=(2,0)$.  }
\label{Fig:dns1}
\end{figure}
A prototypical example distinguishing these two classes is provided by the Kadomtsev–Petviashvili (KP) equations, which describe the evolution of a two-dimensional scalar wave field $u(x,y,t)$~\cite{kadomtsev1970stability,ablowitz2011nonlinear}. In nondimensional form, they are given by
\begin{equation}\label{Eq:KP}
\partial_x\big(\partial_t u + u_{xxx} + 6u u_x\big) = -3\sigma u_{yy}.
\end{equation}
Here, $\sigma = \pm 1$ distinguishes between the two systems: $\sigma = -1$ corresponds to the KP-II equation, while $\sigma = 1$ corresponds to the KP-I equation. Both arise as asymptotic models of weakly nonlinear, weakly dispersive water waves in the long-wavelength limit~\cite{Ablowitz_Segur_1979}. The KP-II equation is relevant in the regime dominated by gravity, whereas the KP-I equation models waves in the presence of strong surface tension. KP-I also models strongly anisotropic acoustic waves in 2D Bose-Einstein condensates and in nonlinear optics~\cite{KuznetsovTuristyn1988}.
From the perspective of integrability, KP-II is classified as strongly integrable, while KP-I falls into the class of weakly integrable systems. As a consequence, the KP-I equation admits a nontrivial WKE, whereas such a description is absent in the KP-II case.

In this work, we demonstrate that nonlinear dispersive PDEs that are integrable via the IST can nevertheless exhibit Fourier-space fluxes leading to thermalization in the weakly nonlinear, wave-kinetic regime—whenever such a regime exists. Our analysis focuses on the KP-I equation, which despite it's integrability, allows for Fourier space fluxes. As a quick motivational illustration of this fact, in Fig.~\ref{Fig:dns1} we plot the absolute value of the anisotropic Fourier amplitudes, 
defined in Eq.~\eqref{Eq:def_FT},
%$\hat{u}(\mathbf{k},t)=\frac{1}{L^{2}}\int_{[0,L]^2} u(\mathbf{x},t)\,e^{-i\mathbf{k}\cdot\mathbf{x}}\,d\mathbf{x}$ 
obtained by direct numerical simulation (DNS) of Eq.~\eqref{Eq:KP} starting with an initial condition supported in a $k$-space circle having constant  amplitudes and random phases   (see Appendix.~\ref{methods:DNS_KP} for more details.). The results in Fig.~\ref{Fig:dns1} reveal clear evidence of nonlocal energy transfer across Fourier space into two symmetric low-$k$ beam-like spectra. In the present paper, we will demonstrate that these features are naturally captured within the framework of wave turbulence theory, leading to a WKE that exhibits thermalization.
Remarkably, we further show analytically that the resulting thermal state is physically meaningful: it undergoes self-truncation in Fourier space, thereby confining energy to a finite spectral domain and avoiding the ultraviolet catastrophe typically associated with equilibrium statistical descriptions.\\

%We emphasize in this work we consider WTT that arises exactly for integrable systems, as opposed to WTT appearing as a result of taking into account higher corrections that break the integrability. For example, the effective continuum description of the FPUT chain develops nontrivial resonances—appearing, for instance, at sixth order—which enable energy transfer and give rise to a finite flux spectrum~\cite{onorato2015route, lvov2018double}, similary for the 1D nonlinear schrodinger equation ~\cite{laurie2012one,bortolozzo2009optical}. Furthermore, our results are also distinct from the exact WTT description of 1D nonlinear schrodinger equation in which the kinetic interactions vanish asympotically in time~\cite{suret2011wave}.

We emphasize that, in this work, we consider WTT for an {\em exactly} integrable system. This is distinct from results where nontrivial WTT emerges as a consequence of higher-order corrections that {\em break} integrability. For example, the effective continuum description of the FPUT  chain with the next to the KdV order included develops nontrivial  sixth-order resonances which enable energy transfers and give rise to a finite flux spectrum~\cite{onorato2015route, lvov2018double}. A similar situation occurs in one-dimensional (1D) nonlinear optics where corrections to the Nonlinear Schrödinger (NLS) equation are taken into account ~\cite{laurie2012one, bortolozzo2009optical}. Furthermore, WTT considered here is also different from a generalized WTT description of the integrable 1D NLS equation, where kinetic interactions vanish asymptotically in time~\cite{suret2011wave}.

\section{Wave turbulence theory}
In this section, we briefly review the WTT for the KP-I equation  \cite{zakharov2009turbulence}. We define the Fourier transform of $u(\mathbf{x},t)$ on a periodic domain $[0,L]^2$ as
\begin{equation}
\hat{u}(\mathbf{k},t)=\frac{1}{L^{2}}\int_{[0,L]^2} u(\mathbf{x},t)\,e^{-i\mathbf{k}\cdot\mathbf{x}}\,d\mathbf{x},
\label{Eq:def_FT}
\end{equation}
where $\mathbf{k}=(k_x,k_y)$. In the absence of nonlinearity, $\hat{u}(\mathbf{k},t)$ rotates in the complex plane with the angular  velocity given by the dispersion relation
\begin{equation}\label{Eq:disp_kxky}
\omega(\mathbf{k})=k_x^{3}+3\frac{k_y^{2}}{k_x}.
\end{equation}

In the weakly nonlinear regime, the KP-I equation admits a statistical description within WTT~\cite{zakharov2009turbulence}. Assuming random initial phases and amplitudes, and taking first the large-box limit ($L\to\infty$) followed by the weak-nonlinearity limit, one obtains a WKE for the wave-action spectrum
\begin{equation}
n_{\mathbf{k}} \equiv \left(\frac{L}{2\pi}\right)^{2}\frac{\langle|\hat{u}(\mathbf{k})|^{2}\rangle}{k_x},
\end{equation}
where $\langle\cdot\rangle$ denotes averaging over initial phases and amplitudes. The WKE reads
\begin{equation}\label{eq:wke}
\frac{\partial n_{\mathbf{k}}}{\partial t}
=
\int_{k_{1x}\ge 0} \!\!\!\!\!\!d\mathbf{k}_1 \int_{k_{2x}\ge 0} \!\!\!d\mathbf{k}_2
\left(\mathscr{R}_{{\mathbf{k}_1} {\mathbf{k}_2} \mathbf{k}}-\mathscr{R}_{\mathbf{k}{\mathbf{k}_1} {\mathbf{k}_2}}-\mathscr{R}_{{\mathbf{k}_2} \mathbf{k}{\mathbf{k}_1}}\right),
\end{equation}
with
\begin{eqnarray}
    \label{Eq:r12k}
\mathscr{R}_{{\mathbf{k}_1} {\mathbf{k}_2} \mathbf{k}}
=
2\pi\left|V^{\mathbf{k}}_{{\mathbf{k}_1} {\mathbf{k}_2}}\right|^{2}
\delta(\mathbf{k}-\mathbf{k}_1-\mathbf{k}_2)\,
\delta(\omega_{\mathbf{k}}-\omega_{\mathbf{k}_1}-\omega_{\mathbf{k}_2}) 
\nonumber \\ 
\times \left(n_{\mathbf{k}_1}n_{\mathbf{k}_2}-n_{\mathbf{k}}n_{\mathbf{k}_1}-n_{\mathbf{k}}n_{\mathbf{k}_2}\right).
\end{eqnarray}
The delta functions enforce resonant triads satisfying
\begin{equation}\label{Eq:rmanifold}
\mathbf{k}=\mathbf{k}_1+\mathbf{k}_2,\quad
\omega_{\mathbf{k}}=\omega_{\mathbf{k}_1}+\omega_{\mathbf{k}_2},
\end{equation}
and permutations thereof. The interaction coefficient is $V^{\mathbf{k}}_{12}=\sqrt{k_x k_{1x} k_{2x}}$. Since $u$ is real-valued, it suffices to consider modes with $k_x\geq 0$. Owing to the resonant conditions in Eq.~\eqref{Eq:rmanifold}, the WKE~(\ref{eq:wke}) admits a stationary solution  $n(k_x,k_y)\sim 1/\omega(\mathbf{k})$ which corresponds to a thermal equipartition of  energy among the wave modes, and which is known as the Rayleigh-Jeans (RJ) spectrum. As we will shortly see, it represents only one particular solution from  an infinite family of solutions of the RJ type.\\
 Equation.~\eqref{eq:wke} has infinitely many conserved quantities, all linear in $n_{\mathbf{k}}$ (and hence quadratic in the wave amplitude)~\cite{zakharov2009turbulence}. In general, an invariant can be written as
\begin{equation}
I = \int_{k_x>0} \lambda_{\mathbf{k}}\, n_{\mathbf{k}}\, d\mathbf{k},
\end{equation}
where $\lambda_{\mathbf{k}}$ is the corresponding spectral density. 
Suppose that the integral defining $I$ converges. Then $I$ is  an invariant of the WKE if its density $\lambda_{\mathbf{k}}\equiv\lambda(k_x,k_y)$  satisfies
\begin{equation}\label{Eq:invcond}
    \lambda_{\mathbf{k}_1} + \lambda_{\mathbf{k}_2} - \lambda_{\mathbf{k}} = 0,
\end{equation}
on the resonant manifold defined by Eq.~\eqref{Eq:rmanifold}; see Refs.~\cite{zakharov1980degenerative,zakharov1988additional}. 
%Thus, $I$ is an invariant if and only if \eqref{Eq:invcond} defines the same resonant manifold as \eqref{Eq:rmanifold}.
It follows immediately that the choices $\lambda_{\mathbf{k}}=k_x$, $\lambda_{\mathbf{k}}=k_y$, and $\lambda_{\mathbf{k}}=\omega_{\mathbf{k}}$ yield conserved quantities. More generally,  degeneracy of the resonant manifold calls into  existence infinitely many of extra invariants~\cite{zakharov1980degenerative,zakharov1988additional}.

To make this explicit for KP-I, we introduce the following parametrization~\cite{zakharov1980degenerative,zakharov1988additional},
\begin{equation}\label{Eq:parametrization}
    k_x=\xi-\eta, \qquad k_y=\xi^{2}-\eta^2,
\end{equation}
with $\xi>\eta$ (since $k_x>0$). The dispersion relation becomes $\omega(\xi,\eta)=4(\xi^{3}-\eta^{3})$, and the resonant manifold reduces to
\begin{align}
    \xi_1-\eta_1+\xi_2-\eta_2 &= \xi-\eta, \nonumber \\
    \xi_1^{2}-\eta_1^{2}+\xi_2^{2}-\eta_2^{2} &= \xi^{2}-\eta^{2}, \\
    \xi_1^{3}-\eta_1^{3}+\xi_2^{3}-\eta_2^{3} &= \xi^{3}-\eta^{3}\nonumber.
\end{align}
These admit only two nontrivial solutions:
\begin{align}\label{Eq:rmsolutions}
    &\xi_1=\eta_2,\quad \xi_2=\xi,\quad \eta_1=\eta, \\
    &\xi_2=\eta_1,\quad \xi_1=\xi,\quad \eta_2=\eta.
    \label{Eq:rmsolutions1}
\end{align}

It then follows that, for any function $F$, the combination $F(\xi)-F(\eta)$ satisfies
\[
F(\xi)-F(\eta)=F(\xi_1)-F(\eta_1)+F(\xi_2)-F(\eta_2),
\]
on the resonant manifold subject to solutions~\eqref{Eq:rmsolutions} or~\eqref{Eq:rmsolutions1}. Therefore,
\begin{equation}
    I=\int_{-\infty}^{\infty}\int_{\eta}^{\infty} d\xi\, d\eta \; 2(\xi-\eta)\big[F(\xi)-F(\eta)\big]\, n(\xi,\eta)
    \label{Eq:inv}
\end{equation}
is conserved, where $2(\xi-\eta)$ is the Jacobian associated with the transformation in Eq.~\eqref{Eq:parametrization}.

In $\xi,\eta$ variables, the WKE becomes:
%~\eqref{Eq:wke_xieta}
%\begin{widetext}
%\begin{figure*}[!b]
\begin{widetext}
\begin{align}\label{Eq:wke_xieta}
    \frac{\partial n(\xi,\eta)}{\partial t}
    &=\int_{\eta}^{\xi} (\xi-\eta)(\lambda-\eta)^2(\xi-\lambda)^2
    \big[n(\lambda,\eta)n(\xi,\lambda)-n(\xi,\eta)n(\lambda,\eta)-n(\xi,\eta)n(\xi,\lambda)\big] d \lambda \nonumber \\
    &+\int_{-\infty}^{\eta} (\xi-\eta)(\eta-\lambda)^2(\xi-\lambda)^2
    \big[n(\eta,\lambda)n(\xi,\lambda)-n(\xi,\eta)n(\eta,\lambda)-n(\xi,\eta)n(\xi,\lambda)\big] d \lambda \nonumber \\
    &+\int_{\xi}^{\infty} (\xi-\eta)(\lambda-\eta)^2(\lambda-\xi)^2
    \big[n(\lambda,\eta)n(\lambda,\xi)-n(\xi,\eta)n(\lambda,\eta)-n(\xi,\eta)n(\lambda,\xi)\big] d \lambda.
\end{align}
\end{widetext}
%\end{figure*}

The WKE admits an infinite family of generalized Rayleigh-Jeans (RJ) solutions of the form~\cite{zakharov2009turbulence}:
\begin{equation}{\label{Eq:rjsol}}
    n_{RJ}(\xi,\eta)= \frac{1}{F(\xi)-F(\eta)}.
\end{equation}
 This solution corresponds to a thermal equilibrium state realizing a $k$-space equipartition of the invariant $I$ with density $F(\xi)-F(\eta)$.

The WKE satisfies an H-theorem~\cite{zakharov2012kolmogorov,nazarenko2011wave}, implying that the entropy $\mathcal{S}=\int 2(\xi-\eta) \log n(\xi,\eta)~ d\xi d\eta$ is maximized by the Rayleigh-Jeans (RJ) distribution. Thus, for arbitrary initial conditions Eq.~\eqref{Eq:wke_xieta} (or equivalently Eq.~\eqref{eq:wke}) evolves toward thermal equilibrium. This raises several interesting questions: firstly, how the infinitely many conserved quantities constrain the dynamics of WKE and, secondly, whether a sufficiently general initial condition evolves to an RJ state and, if so, what form of $n_{RJ}$ or $F(x)$ can one expect upon thermalization.

\begin{figure*}[t]
\centering
\begin{subfigure}{0.34\textwidth}
    \centering
    \includegraphics[width=\linewidth]{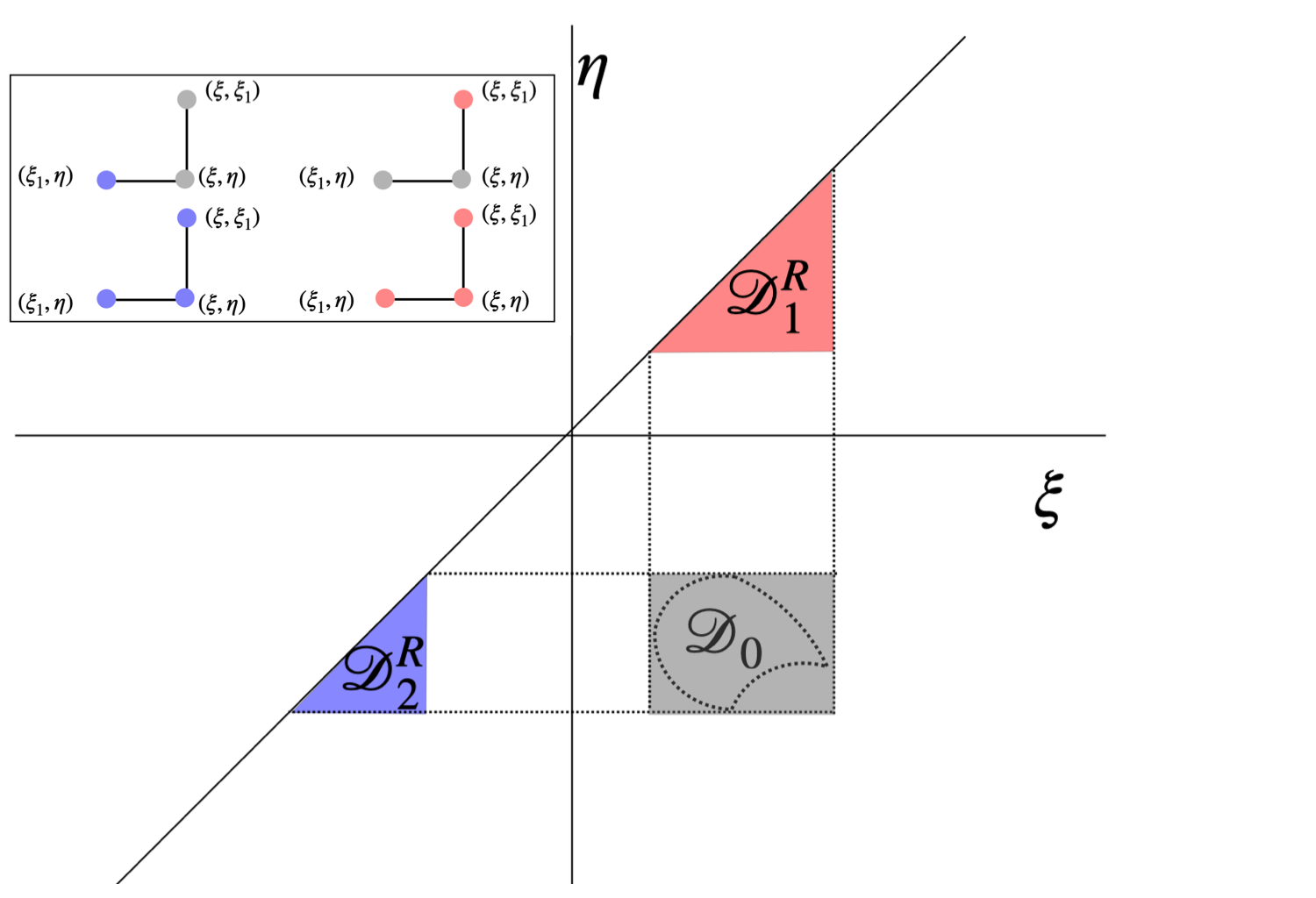}
    \caption{}
    \label{fig1:subfig1}
\end{subfigure}
\hfill
\begin{subfigure}{0.32\textwidth}
    \centering
    \includegraphics[width=\linewidth]{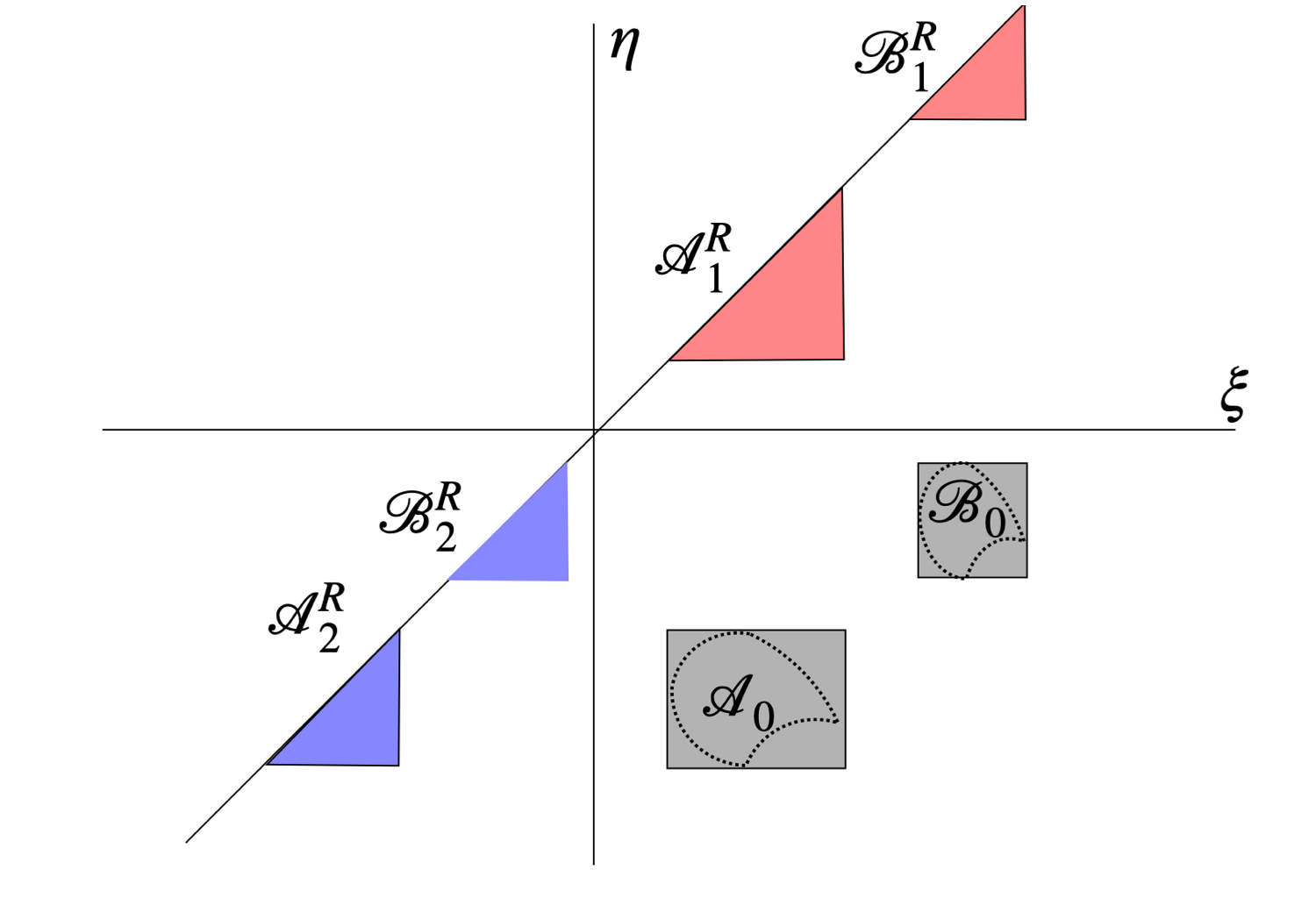}
    \caption{}
    \label{fig1:subfig2}
\end{subfigure}
\hfill
\begin{subfigure}{0.32\textwidth}
    \centering
    \includegraphics[width=\linewidth]{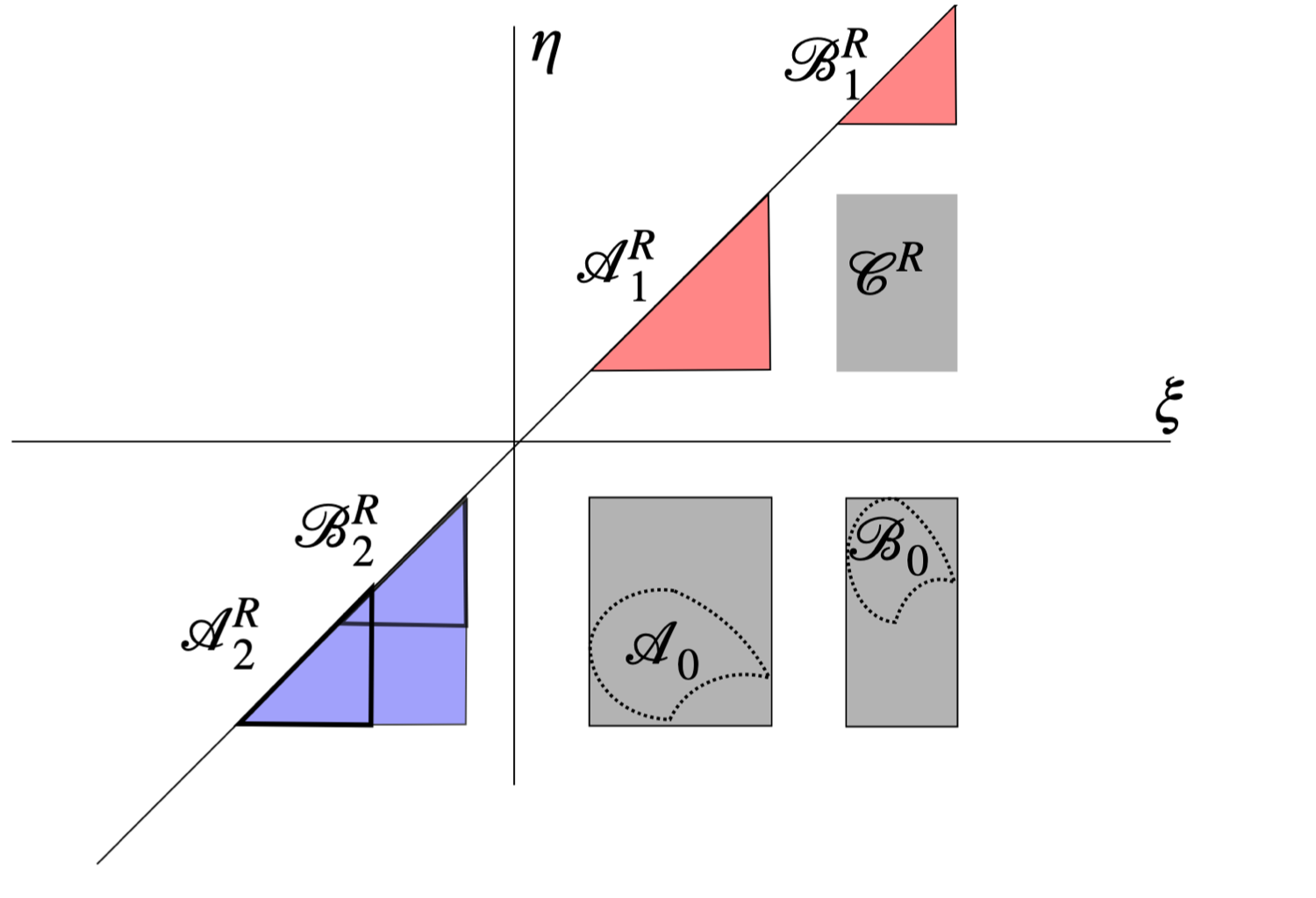}
    \caption{}
    \label{fig1:subfig3}
\end{subfigure}
\caption{(a) Schematic of a compact domain $\mathscr{D}_0$, its minimal enclosing rectangle $\mathscr{D}_0^{R}$, and the triangular regions $\mathscr{D}_1^{R}$ (red) and $\mathscr{D}_2^{R}$ (blue). The inset shows the allowed triadic interactions. (b) Schematic of two disconnected compact domains $\mathscr{A}_0$ and $\mathscr{B}_0$ with no intersecting projections along $\xi$ or $\eta$ and corresponding final supports (shaded in gray, blue and red). }
%\label{Fig:permittriads}
\end{figure*}

\section{Preservation of compact supports}

To address these questions, we first analyze the resonant manifold branches in Eq.~\eqref{Eq:rmsolutions}. Since $k_x>0$, all modes lie in $\mathbb{R}^2_{>} := \{ (\xi,\eta) \in \mathbb{R}^2 \mid \xi > \eta \}$. A resonant triad is given by $\{(\xi,\eta),(\xi,\xi_1),(\xi_1,\eta)\}$ (and symmetrically $\{(\xi,\eta),(\xi,\xi_2),(\xi_2,\eta)\}$), with
\begin{equation}\label{eq:triad_ineq}
    \xi \geq \eta,\quad \xi \geq \xi_1,\quad \xi_1 \geq \eta.
\end{equation}
These constraints restrict interactions to "inverted-L" configurations (see inset in Fig.~\ref{fig1:subfig1}). 

Moreover, as depicted in Fig.~\ref{fig1:subfig1} (i) if two modes of a resonant triad lie in the gray region, the third lies in one of the triangular regions, $\mathscr{D}_1^{R}$ or $\mathscr{D}_2^{R}$; (ii) if two modes lie within the same triangular region, the third remains in that region. Consequently, if the initial spectrum is supported on a compact domain $\mathscr{D}_0$, wave action spreads only into $\mathscr{D}_1^{R}$ and $\mathscr{D}_2^{R}$, while also filling the minimal enclosing rectangle $\mathscr{D}_0^{R}$. This observation can be made precise in the following theorem.

\begin{theorem}\label{th:compD0}
Let $n(\xi,\eta,t)$ be a solution of the wave kinetic equation on $\mathbb{R}^2_{>}$. Suppose
\[
\operatorname{supp} n(\cdot,0) = \mathscr{D}_0 \subset \mathbb{R}^2_{>},
\]
where $\mathscr{D}_0$ is a connected compact domain. Then for all $t > 0$,
%\textcolor{red}{Why need $t^*$? Then for any $t>0$ ...}
\[
\operatorname{supp} n(\cdot,t)
=
\mathscr{D}^{R}_0 \cup \mathscr{D}^{R}_1 \cup \mathscr{D}^{R}_2.
\]
\end{theorem}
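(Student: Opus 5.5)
The proof has two halves. First, the set $\mathscr{S}:=\mathscr{D}^{R}_0\cup\mathscr{D}^{R}_1\cup\mathscr{D}^{R}_2$ is \emph{invariant}: no wave action leaves it. Second, $\mathscr{S}$ is \emph{filled instantaneously}: $n>0$ on all of it for every $t>0$.

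I fix notation as follows. Let $\mathscr{D}^{R}_0=[a,b]\times[c,d]$ in the $(\xi,\eta)$ plane, with $[a,b]$ and $[c,d]$ the projections of $\mathscr{D}_0$ on the two axes. I read the triangles off Fig.~\ref{fig1:subfig1} as
\[
\mathscr{D}^{R}_1=\{a\le\eta<\xi\le b\},\qquad \mathscr{D}^{R}_2=\{c\le\eta<\xi\le d\},
\]
that is, the points of $\mathbb{R}^2_{>}$ both of whose coordinates lie in the $\xi$-projection, respectively in the $\eta$-projection. By Eqs.~\eqref{Eq:rmsolutions}--\eqref{Eq:rmsolutions1}, every triad is $\{(\xi,\eta),(\xi,\lambda),(\lambda,\eta)\}$ with $\eta\le\lambda\le\xi$, which is the three-line structure of Eq.~\eqref{Eq:wke_xieta}.

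\textbf{Step 1: combinatorial closure.} I check by cases that if two members of a triad lie in $\mathscr{S}$, then so does the third.
\begin{itemize}
\item From $(\xi,\lambda)$ and $(\lambda,\eta)$ one gets $(\xi,\eta)$.
\item From $(\xi,\eta)$ and $(\xi,\lambda)$ one gets $(\lambda,\eta)$.
\item From $(\xi,\eta)$ and $(\lambda,\eta)$ one gets $(\xi,\lambda)$.
\end{itemize}
In each case I track which of $[a,b]$ and $[c,d]$ the shared coordinate and the two free coordinates belong to, and use the ordering $\eta\le\lambda\le\xi$. For example, two rectangle points sharing $\xi$ produce $(\lambda,\eta)$ with $\lambda,\eta\in[c,d]$, which lies in $\mathscr{D}^R_2$. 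Mixed cases such as $\mathscr{D}^R_1$ with $\mathscr{D}^R_0$ are handled the same way. The one situation that could escape $\mathscr{S}$ is a point with $\xi\in[c,d]\setminus[a,b]$ and $\eta\in[a,b]\setminus[c,d]$; I expect the ordering $\xi>\lambda>\eta$ to rule it out, and I will verify this explicitly.

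\textbf{Step 2: no escape.} Take $\mathbf{k}\notin\mathscr{S}$. By Step 1, whenever $n$ is supported in $\mathscr{S}$, every gain product in Eq.~\eqref{Eq:wke_xieta} vanishes at $\mathbf{k}$. The equation at $\mathbf{k}$ then reduces to
\[
\partial_t n(\mathbf{k})=-\gamma_{\mathbf{k}}[n]\,n(\mathbf{k}),
\]
with a locally bounded rate $\gamma_{\mathbf{k}}[n]$. Since $n(\mathbf{k},0)=0$, Gronwall gives $n(\mathbf{k},t)\equiv0$. To make this rigorous rather than circular, I will run it as a uniqueness argument. Let $\tilde n$ be the solution of the WKE restricted to $\mathscr{S}$, extended by zero outside. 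By Step 1 it solves the full equation, so it coincides with $n$ by uniqueness in the class of bounded solutions, which I will assume.

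\textbf{Step 3: filling.} Write the Duhamel form
\[
n(\mathbf{k},t)=e^{-\int_0^t\gamma}\Big(n_0(\mathbf{k})+\int_0^t e^{\int_0^s\gamma}\,G_{\mathbf{k}}(s)\,ds\Big),
\]
where the gain $G_{\mathbf{k}}\ge0$. This shows that positivity, once acquired, persists. It also shows that $n(\mathbf{k},t)>0$ for all $t>0$ as soon as some $\lambda$-interval of gain pairs is positive for small times. I therefore build the support in generations.
\begin{itemize}
\item \emph{Generation 1.} Pairs in $\mathscr{D}_0$ sharing a coordinate produce points of $\mathscr{D}^R_1$ and $\mathscr{D}^R_2$. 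Connectedness of $\mathscr{D}_0$ ensures every $\xi\in[a,b]$ (respectively $\eta\in[c,d]$) is realised by a segment of $\mathscr{D}_0$, so these triangles are covered.
\item \emph{Generation 2.} Combining points of $\mathscr{D}^R_1$ or $\mathscr{D}^R_2$ with points of $\mathscr{D}_0$ via the rule $(\xi,\lambda)+(\lambda,\eta)\to(\xi,\eta)$ fills the rectangle. Iterating these rules gives the rest, and each generation is positive for all $t>0$ by the Duhamel formula.
\end{itemize}

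\textbf{Main obstacle.} The hard step is Generation 2 when $[a,b]$ and $[c,d]$ overlap little, or the shape of $\mathscr{D}_0$ is awkward. One must exhibit, for each rectangle point $(\xi,\eta)$, a positive-measure set of $\lambda$ such that both $(\xi,\lambda)$ and $(\lambda,\eta)$ already carry wave action. My first attempt will exploit the inclusion $\mathscr{D}_0\subset\mathscr{D}^R_0$ together with connectedness: I will first take $\lambda$ equal to the $\xi$-coordinate of a point of $\mathscr{D}_0$ whose $\eta$-coordinate is $\eta$, then use the triangle $\mathscr{D}^R_1$ to supply $(\xi,\lambda)$. The remaining work is to confirm that $\lambda$ lies between $\eta$ and $\xi$. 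Boundary points, where $n$ may vanish, are handled by taking the support to be the closure.
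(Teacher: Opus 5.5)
The gap is in Generation~1, not in Generation~2 where you locate the main obstacle. Your mechanism for reaching a point $(p,q)\in\mathscr{D}^R_2$ from $\mathscr{D}_0$ is the gain $\int_{\lambda>p} n(\lambda,q)\,n(\lambda,p)\,d\lambda$. For this to be nonzero you need a positive-measure set of vertical lines $\xi=\lambda$, \emph{each} meeting $\mathscr{D}_0$ at \emph{both} heights $q$ and $p$; horizontal chords play the same role for $\mathscr{D}^R_1$. Connectedness only says that each single height in $[c,d]$ is attained somewhere in $\mathscr{D}_0$, not that two heights lie on a common chord.

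Take the thin band $\mathscr{D}_0=\{2\le\xi\le 4,\ |\eta-\xi+6|\le\delta/2\}$. Every vertical and horizontal chord has length at most $\delta$, so Generation~1 only excites the strips $\xi-\eta\le\delta$ (i.e.\ $k_x\le\delta$) along the diagonals of the two triangles. A tilted ellipse fails in the same way, less drastically. The claim holds for the square and disk of Runs~A and~B only because there a single chord spans the full height. Without filled triangles, Generation~2 has nothing to draw on.

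The missing idea is an iteration that builds large $k_x$ out of small $k_x$. In your framework the natural one is the sum branch inside a triangle. The rule $(\lambda,\eta)+(\xi,\lambda)\to(\xi,\eta)$, with both parents in $\mathscr{D}^R_2$, adds the $k_x$'s. So a near-diagonal strip of width $\delta$ (uniform on compact subintervals, by compactness) yields width $2\delta$, and finitely many generations fill the triangle.

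The paper proceeds differently. It grows one vertical chord of $\mathscr{D}_0$ step by step to the full height $[c,d]$. Each step combines the chord with a small triangle $\mathscr{P}_\epsilon$ produced by a short vertical chord elsewhere in $\mathscr{D}_0$. Only then does it apply Lemma~\ref{lem:dense_L} to the full-height chord to obtain all of $\mathscr{D}^R_2$.

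Two further corrections.

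\emph{Connectedness is not the right hypothesis.} What Generation~1 actually needs is that every value strictly inside $[a,b]$ and $[c,d]$ is a coordinate of an \emph{interior} point of $\mathscr{D}_0$; without some such hypothesis the statement is false. Take $\mathscr{D}_0=[0,1]\times[-3,-2]\,\cup\,[1,2]\times[-2,-1]$, two squares touching at a corner. The two parents in every gain integral share the integration variable $\lambda$ as a coordinate, so the two pieces couple only for $\lambda\in\{1,-2\}$, a null set. The support therefore stays in the union of the two separate rectangle-plus-triangles sets and never reaches, e.g., $(1.5,-2.5)$. This is the role of the no-cut-points hypothesis in the appendix version of the theorem.

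\emph{The condition $\eta<\lambda<\xi$ in Generation~2 can fail.} In the band above, take the target $(2.2,-2.2)$. The chord of $\mathscr{D}_0$ at height $-2.2$ sits at $\lambda\approx3.8>\xi$, and the vertical chord at $\xi=2.2$ lies below the target. So the sum branch cannot be combined with $\mathscr{D}_0$ at all. You need the $\lambda>\xi$ branch (third line of Eq.~\eqref{Eq:wke_xieta}) with parents $(\lambda,\eta)\in\mathscr{D}_0$ and $(\lambda,\xi)\in\mathscr{D}^R_1$. Once $\mathscr{D}^R_1$ is filled, either this branch or the sum branch always applies.

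Your Steps~1--2 and the Duhamel positivity argument are sound. Step~1 is in fact immediate: $\mathscr{D}^R_0\cup\mathscr{D}^R_1\cup\mathscr{D}^R_2=([a,b]\cup[c,d])^2\cap\mathbb{R}^2_>$, and any two members of a triad together carry all three coordinates $\xi,\lambda,\eta$.
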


The proof follows directly from the constraints in \eqref{eq:triad_ineq} and is provided in the Appendix.~\ref{methods:proof_fsupp}. The result extends to more general compactly supported initial data, including disconnected domains (see Appendix.~\ref{methods:proof_fsupp}). 
As an illustration, consider two initially disconnected domains, $\mathscr{A}_0$ and $\mathscr{B}_0$, with no overlap in their $\xi$- or $\eta$-coordinates. In this case, the resulting finite support is simply the union of the individual supports associated with each domain (i.e., four triangular regions and two rectangles; see Fig.~\ref{fig1:subfig2}).

Slightly more complex scenario arises when the domains have overlapping ranges in either $\xi$ or $\eta$. For instance, in Fig.~\ref{fig1:subfig2}, the domains $\mathscr{A}_0$ and $\mathscr{B}_0$ overlap in their $\eta$-coordinates. In such cases, Theorem~\ref{th:compD0} still applies; however, the final support is modified and corresponds to the shaded regions shown in Fig.~\ref{fig1:subfig3}.
This restricted mixing has important implications: the relevant RJ states must be supported on the same domain, i.e.,
\begin{equation}{\label{eq:RJsupport}}
\operatorname{supp} n_{\small{RJ}} = \mathscr{D}^{R}_0 \cup \mathscr{D}_1^{R} \cup \mathscr{D}_2^{R}
\end{equation}
in the cases of an initially single connected compact domain. For initial data consisting of two disconnected domains with no overlap in the $\xi$- or $\eta$-ranges (as in Fig.~\ref{fig1:subfig2}), the dynamics on the two corresponding supports, $\mathscr{A}^{R}_0 \cup \mathscr{A}_1^{R} \cup \mathscr{A}_2^{R}$ and $\mathscr{B}^{R}_0 \cup \mathscr{B}_1^{R} \cup \mathscr{B}_2^{R}$, remain independent. Consequently, the wave action thermalizes into distinct RJ spectra associated with $\mathscr{A}_0$ and $\mathscr{B}_0$, respectively.\\
Recall that usually for the classical fields, the RJ solutions are subject to the well-known ultraviolet catastrophe. For example, for the RJ solutions corresponding to states of energy equipartition among all Fourier modes, the physical-space energy density is infinite  when the Fourier space is unbounded. 
%, however, as $k\to\infty$, such equipartition at finite temperature necessarily implies an infinite total energy. 
This divergence severely limits the physical relevance of RJ equilibria.
In contrast, our results demonstrate that this difficulty is naturally circumvented in the present setting. Specifically, the system exhibits an effective \emph{self-truncation}, whereby the dynamics remain confined to a finite region of Fourier space determined by the support of the solution. As a consequence, the ultraviolet divergence is avoided, and the resulting equilibria remain physically well-defined.

%In the  case of multiple domains including non-intersecting enclosing rectangles, the independent dynamics in the   sets of supports (each consisting of a rectangle plus two triangles)  would generally evolve into  a distinct RJ spectrum in each such a set.

\begin{figure}[!h]
%\centering
\includegraphics[width=\linewidth]{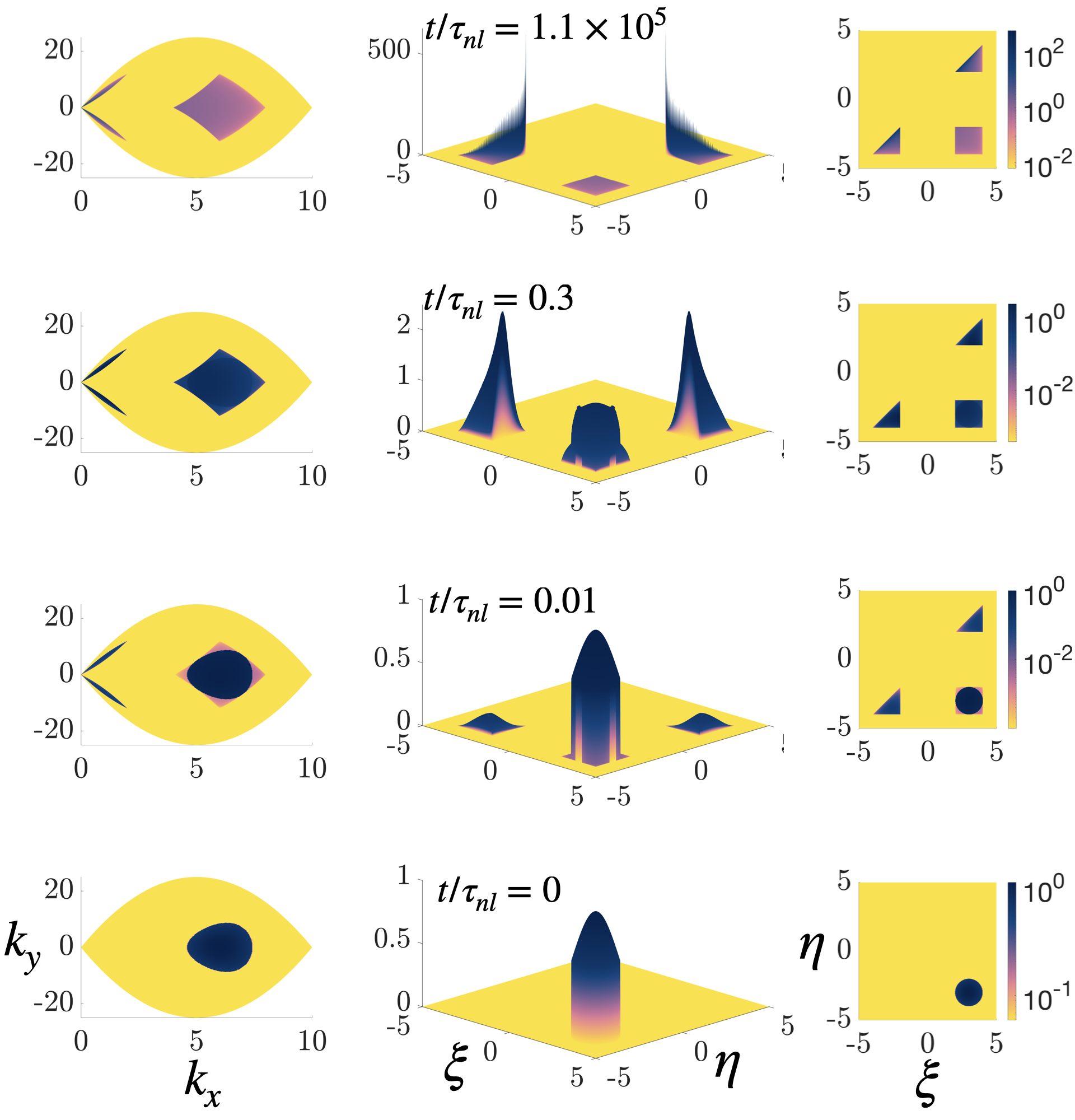}
\caption{Spectrum evolution in Run B. Left column: 2D color plot of $n(k_x,k_y)$. Center and right columns:
spectrum in the 
$(\xi,\eta)$ space -- the landscape and the 2D plots respectively.
Yellow area marks the computational domain; it is square $(-5,5)^2$ in the $(\xi,\eta)$ plane and a curved "eye" shape in $(k_x,k_y)$.}
\label{Fig:icevol}
\end{figure}
%\begin{figure*}[t]
%\centering
%\begin{subfigure}{0.48\textwidth}
 %   \centering
 %   \includegraphics[width=\linewidth]{figures/fig_2_maintext_invaspec/fig_2_maintext_invaspec.001.png}
 %   \caption{}
 %   \label{fig2:subfig1}
%\end{subfigure}
%\hfill
%\begin{subfigure}{2*\columnwidth}
%    \centering
%    \includegraphics[width=\linewidth]{figures/fig_2_maintext_invaspec/fig_2_maintext_invaspec.002.png}
%    \caption{}
%    \label{fig2:subfig2}
%\end{subfigure}
%\caption{ }
%\label{fig2}
%\end{figure*}

\section{Numerical Results}

We now present results from numerical simulations of the WKE~\eqref{Eq:wke_xieta}. 
 In this representation, the $\bm k$ and $\omega$ delta functions have already been integrated out, which allows to avoid the interpolation errors typical for the WKE simulations~\cite{krstulovic2025wavkins} and achieve a highly accurate conservation of the invariants Eq.~\eqref{Eq:inv} (with relative error $\sim 10^{-12}$).
 
 For initial condition we choose a Gaussian supported on $\mathscr{D}_0$:
\[
n(\xi,\eta,0)= \exp\!\left(-\frac{(\xi-\xi_0)^2}{\Delta_{\xi}^2}-\frac{(\eta-\eta_0)^2}{\Delta_{\eta}^2}\right), \quad (\xi,\eta)\in \mathscr{D}_0,
\]
where $(\xi_0,\eta_0)$ sets the peak location and $\Delta_{\xi},\Delta_{\eta}$ determine the widths.
We present results from two simulations which differ only in their initial support $\mathscr{D}_0$: for Run A $\mathscr{D}_0$ is a square and for Run  B $\mathscr{D}_0$ is a circle,
see Appendix.~\ref{methods:DNS_WKE} for details.
%bottom insets of Fig.~\ref{fig2:subfig1} and Fig.~\ref{fig2:subfig2}, respectively. 
Additionally, we define a kinetic (nonlinear) timescale $\tau_{nl}$ as the time at which the maximum of the initial spectrum decreases to three-quarters of its initial value.
In Fig.~\ref{Fig:icevol} we show the snapshots from the time evolution of the spectrum in Run B in both
$(k_x,k_y)$ and the $(\xi,\eta)$ spaces. Run A evolution is visually similar (see Appendix.~\ref{methods:add_sim}), but there are important quantitative differences at the thermalized stage, which will be discussed in the end of this section. We also present in Appendix.~\ref{methods:add_sim} a numerical run representing the case in Fig.~\ref{fig1:subfig3}, this run also shows onset of thermalization.

\subparagraph{Initial evolution and nonlocal transfer.} 
For both Run A and Run B, the early-time dynamics, at $t/\tau_{nl}\sim \mathcal{O}(10^{-2})$, exhibit a symmetric and nonlocal spreading of wave action into the regions $\mathscr{D}_1^{R}$ and $\mathscr{D}_2^{R}$ (see Fig.~\ref{Fig:icevol}, central and right columns). In $k$-space (Fig.~\ref{Fig:icevol}, left), these regions correspond to the beam-like structures observed in the low-$k$ regime, already highlighted in the Introduction in the context of direct numerical simulations of KP-I. We now identify these structures as a direct consequence of three-wave resonances, as captured by the WKE. The initial transfer is primarily mediated by triads involving modes in the immediate vicinity of the initial spectral peak, namely $\{(\xi_0,\eta_0), (\xi_0,\eta_1), (\xi_1,\eta_0)\}$.

Since $\Delta \ll |\xi_0|, |\eta_0|$, the initial evolution can be interpreted as a modulational instability of a single mode with ${\bf k} = (\xi_0 -\eta_0, \, 0)$ where the modulations of the length scale of $1/\Delta$ arise due to the
spectrum broadening with width $\sim \Delta$. This makes our initial evolution similar to the classical setup for the modulational instability of a periodic KP-1 wave \cite{kuznetsov1984stability,infeld2000nonlinear,modulational2017ablowitz}. However, the difference is that in our case the modulations are random, which follows from the randomness of phases of waves described by the WKE. Since the line $\xi=\eta$ corresponds to $\bf k =0$, we observe that the growth of the peaks close to this line is a result of an instability of the initial state characterized by transfer of its energy to the largest scales in a strongly nonlocal way (bypassing excitation of the intermediate scales). 

At intermediate times $t/\tau_{nl}\sim \mathcal{O}(1)$, peaks emerge along the line $\xi=\eta+c$, where $c$ is of the order of the grid spacing. (Note that $n(\xi,\eta,t)=0$ along $\xi=\eta$ for all times.) The evolution for Run A and Run B is visually similar, but the position of the peaks is somewhat different. This transient evolution marks the initial relaxation toward the thermalized state.

\subparagraph{Thermalization.}
At late times, $t/\tau_{nl} \sim \mathcal{O}(10^5)$, these peaks migrate toward the corners of the triangular regions $\mathscr{D}_1^{R}$ and $\mathscr{D}_2^{R}$, departing significantly from the initial resonant structure centered at $(\xi_0,\eta_0)$. Their amplitude increases to $\gtrsim \mathcal{O}(10^2)$ relative to the initial condition. In the $(k_x,k_y)$ representation, this corresponds to the emergence of a pronounced double-beam-like peak near $(k_x,k_y)\simeq(0,0)$ (see Fig.~\ref{Fig:icevol}, left).

In the long-time limit $t/\tau_{nl} \to \infty$, the wave action is expected to relax toward an RJ state given by Eq.~\eqref{Eq:rjsol}, where the functional form $F(x)$ is constrained by an infinite set of invariants determined by the initial condition. We demonstrate thermalization in both runs using three independent diagnostics.

For RJ equilibria the entropy production rate vanishes, $d\mathcal{S}/dt = 0$. Consistently, Fig.~\ref{Fig:entropy} shows that the entropy $\mathcal{S}$ from our simulations approaches a plateau at long times, supporting the onset of thermalization.

However, a more defining and stricter property of RJ solutions of the WKE is that the nonlinear collision integral vanishes pointwise: each square bracketed term in the integrand of Eq.~\eqref{Eq:wke_xieta} (or the nonlinear part of Eq.~\eqref{Eq:r12k}) goes to zero. To quantify this, we monitor
\begin{equation}
\mathscr{N}_{\infty}
=\sup_{\bm{\chi},\bm{\chi}_1,\bm{\chi}_2}
\big| n_{\bm \chi_1} n_{\bm \chi_2} - n_{\bm \chi} n_{\bm \chi_1} - n_{\bm \chi} n_{\bm \chi_2}\big|,
\end{equation}
where, $\bm {\chi}=(\xi,\eta)$. For an RJ state, $\mathscr{N}_{\infty}=0$. As shown in Fig.~\ref{Fig:entropy}, $\mathscr{N}_{\infty}$ starts at $\sim \mathcal{O}(10^3)$ and decreases with time reaching values $\lesssim 10^{-11}$ for Run A and $\lesssim 10^{-9}$ for Run B at $t/\tau_{nl} \sim \mathcal{O}(10^5)$, indicating convergence toward equilibrium.

Finally, for initial conditions symmetric about the line $\xi+\eta=0$ (equivalently $k_y=0$), the WKE preserves this symmetry for all $t>0$. Consequently, RJ solutions inherit this symmetry, implying that
\begin{equation}
F(\xi)-F(\eta)=F(-\eta)-F(-\xi).
\label{Eq:symfit}
\end{equation}
This relation enables reconstruction of $F(\cdot)$, and therefore $n_{RJ}$ of the form Eq.~\eqref{Eq:rjsol}, from the final simulation snapshot data along a single horizontal (or vertical) slice in $\mathscr{D}_0^{R}$ (see Appendix.~\ref{methods:FIT_RJ}). %Using the final simulation snapshot, we extract such a slice and construct a fit $\hat{n}(\xi,\eta)$ via \eqref{Eq:symfit}, which by construction satisfies 
%\eqref{Eq:rjsol}. 
The relative error between the resulting $n_{RJ}$ %defined in \eqref{Eq:rjsol} 
and the numerical $n(\xi,\eta)$ is at most $\sim 10^{-9}$, indicating convergence to the RJ form and providing further solid evidence of thermalization.
Moreover, sometimes  function $F(\cdot)$ obeying Eq.~\eqref{Eq:symfit}
can be accurately represented by a polynomial 
$F(x) \approx \sum_{j=1}^{2N+1} x^{2j+1}  $ leading to an accurate RJ fit. For example, $N=10$  such a fit in Run A leads to the maximum relative deviation between the numerical and the RJ spectrum being at most $\sim\mathcal{O}(10^{-2})$. Thus considering first few invariants with power law densities is enough to define the system, without worrying about the whole functional family of them. Unfortunately, not always the polynomial approximation provides a good fit, for instance it does not for our Run B, even though the functional fit works perfectly well.
Conditions for the polynomial fit to work are not yet completely clear.

\begin{figure}[!]
\centering
\includegraphics[width=\linewidth]{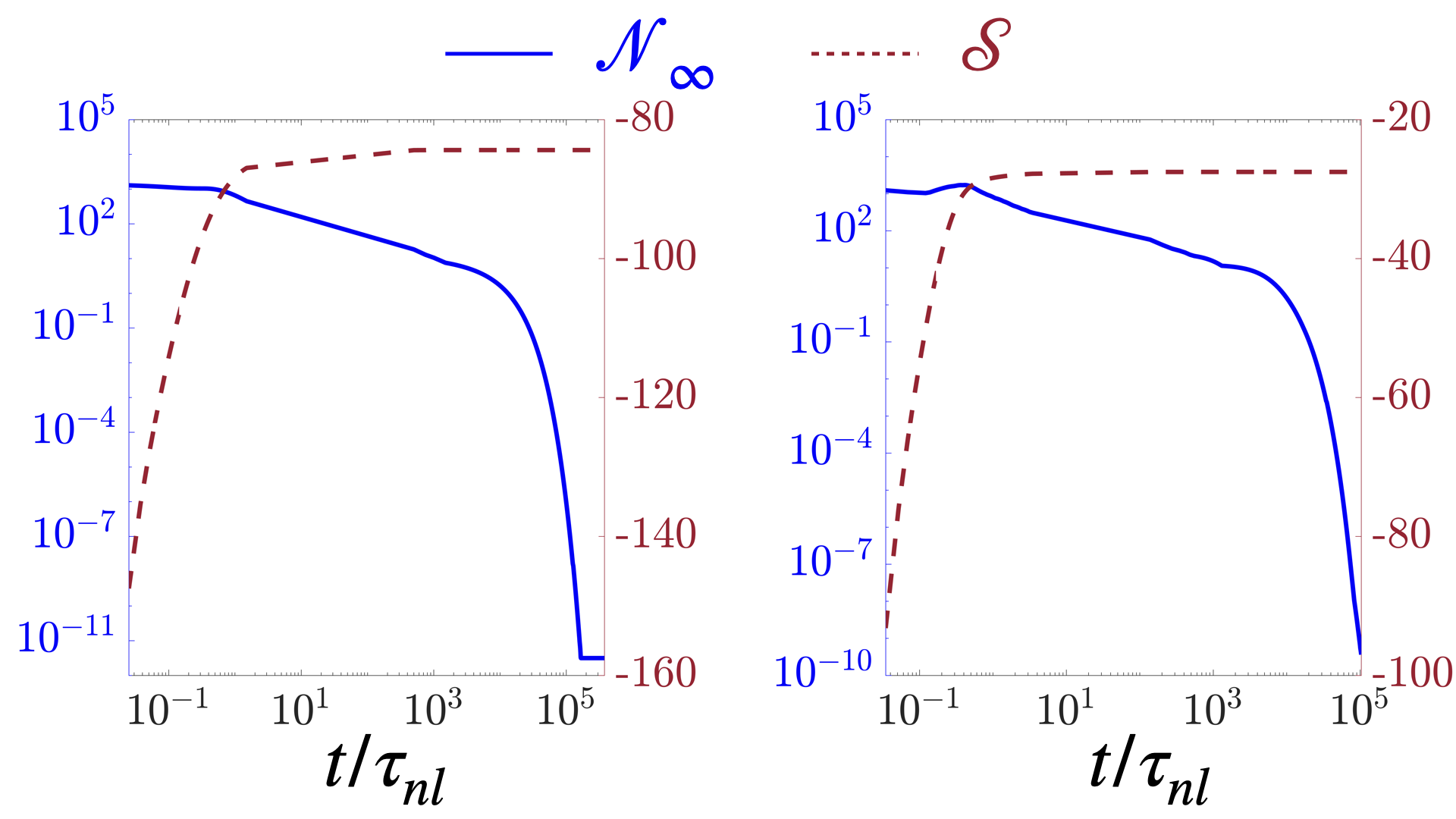}
\caption{Temporal evolution of bracket ${\cal N}_\infty$ (solid line) and entropy $\mathcal{S}$ (dashed line). Left: Run A; right: Run B. }
\label{Fig:entropy}
\end{figure}

\section{Summary and Discussions}

In this paper, we studied the Fourier  space evolution and thermalization described by the WKE derived from the integrable KP-I model. Our main results may be summarized as follows.

\begin{itemize}
    \item Despite its integrability, the KP-1 model allows transfers of energy and other quantities among the Fourier modes. Such transfers are described by the WKE of the wave turbulence theory.
    \item 
    The evolution described by the WKE leads to nonlocal generation of large-scale motions having a shape of two symmetric low-$k$  beams. 
    \item The WKE evolution preserves compactness of the $k$-space supports. 
    \item The long time evolution described by the WKE can lead to thermalization to a generalized RJ spectrum, whose shape is determined by the whole functional series of invariants Eq.~\eqref{Eq:inv}.
    In some cases, not always, only few basic invariants out of that family (the ones with power law densities) are sufficient for determining the thermal state with high accuracy.
    \item We proved thermalization to RJ states using two examples, but this does not rule out a possibility that there may exist such initial conditions that thermalization to the RJ is impossible. Potentially, the WKE evolution may lead to a spectrum condensation into a Dirac mass at an isolated $k$-mode, in a way it happens, e.g., for the NLS system \cite{Connaughton_CondensationClassicalNonlinear_2005}.
    This issue requires further study.
\end{itemize}
Now, we would like to discuss links with the KP-I evolution beyond the WKE description. Since the thermalized RJ states contain high low-$k$ peaks (infinite if we allow $k\to0$), the WKE description based on the weak nonlinearity assumption must inevitably break down near these peaks. The phases will no longer be random for low-$k$ modes, and coherent structures are likely to appear. Lump solutions represent an example of stable coherent structures in KP-I model, which arise, in particular, as a result of instability of the KP-I soliton \cite{Pelinovsky95}. In the physical space a single-lump solution is~\cite{Satsuma,ablowitz2011nonlinear}:
\begin{equation}{\label{Eq:lumpsol}}
    u(x',y')=4\frac{-(x'-2ay')^2+4b^2y'^2+\frac{1}{4b^2}}{\bigg[(x'-2ay')^2+4b^2y'^2+\frac{1}{4b^2}\bigg]^2}
\end{equation}
where, $a$ and $b$ are parameters and the shifted coordinates are defined as $x' = x - 12(a^2 + b^2)t$ and $y' = y - 12 a t$.
The waveaction spectrum corresponding to the state  with two spatially separated lumps with $a=3$ and $a=-3$
is shown in Fig.~\ref{Fig:lumps} in both the $(k_x,k_y)$ space (left panel) and the
$(\xi,\eta)$ space (right panel). We see a striking similarity in the spectra of the double-lump structure with the double-beam  $k$-space structure arising from the WKE evolution and corresponding double triangle structure in the $(\xi,\eta)$ plane.
This similarity leads to a conjecture that at the strongly nonlinear stage, when appearance of the strong peaks leads to failure of the WKE and the wave phases become correlated, localized lump-like coherent structures appear. The latter will retain the $k$-space features inherited from the previous stage governed by the WKE.    \\

The primary aim of this work has been to demonstrate that, contrary to conventional expectations, the KP-I integrable system admits a nontrivial wave turbulence description, characterized by genuine energy exchange between modes and the formation of long-wave structures that drive the system toward thermalization. Nevertheless, several natural questions arise, which merit further investigation.
First, what is the precise relationship between the Fourier-based description employed in the WTT and a representation more closely aligned with the IST? While no direct analogue of the squared eigenfunction basis—familiar from 1D integrable systems—is currently known for KP-I, it is natural to ask to what extent the observed energy transfer in the Fourier basis reflects a redistribution into a more appropriate nonlinear basis. Such a basis would include nonlinear normal modes, which, in the KP-I setting, correspond to lump-like rational solutions.
Second, our numerical results suggest that the dynamics are dominated by interactions between a localized packet of Fourier modes centered at $(k_{0x},k_{0y})$ and a long-wave component. This raises the question of whether a reduced long-wave/short-wave description can be constructed, potentially within an integrable framework, as is known for several such coupled systems. In this picture, the resonance condition in Eq.~\eqref{Eq:rmanifold} acquires a clear interpretation: the group velocity of the short-wave packet in the direction of the long wave matches the phase velocity of the long wave. Thus, the dynamics may be viewed as a short-wave packet propagating on a long-wave background. This interpretation is consistent with the Burgers-type structure of KP-I, where the nonlinear term $ u u_x$ leads to the formation of steep gradients that are subsequently regularized into dispersive shock waves with extended oscillatory tails.
Finally, it is natural to ask how these considerations extend to the KP-II equation. In that setting, the soliton dispersion relation (relating speed to width) coincides with the linear dispersion relation of KP-I. Moreover, the phase shifts arising in two-soliton interactions involve a resonance condition analogous to Eq.~\eqref{Eq:rmanifold}, but with nonlinear frequencies. In this context, resonance corresponds to the formation of a Mach stem—a structure that can be interpreted as an infinitely extended soliton generated by the interaction. This raises the intriguing possibility that KP-II dynamics may favor the generation of highly elongated solitons, capable of transporting significant energy along their crests. In physical contexts such as shallow water, such structures could, in principle, have dramatic consequences, potentially leading to extreme wave events, like massive tsunamis.

\begin{figure}[!]
\centering
\includegraphics[width=\linewidth]{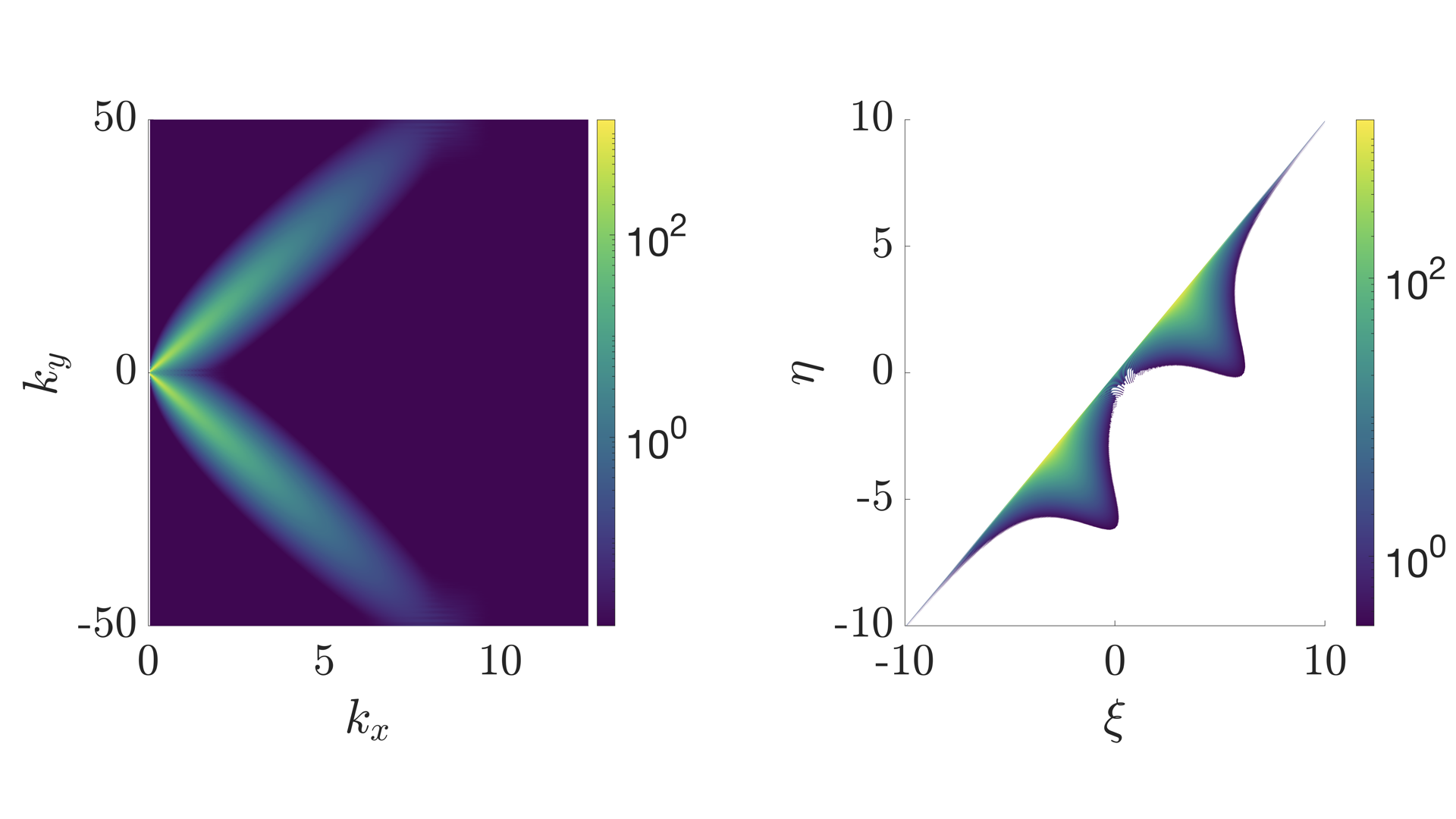}
\caption{Color log-scale plot of the wave-action spectrum corresponding to the superposition of two independent lump solutions of the KP-I equation. The right panel shows the spectrum in the $(k_x,k_y)$ plane, while the left panel displays it in the transformed $(\xi,\eta)$ coordinates.
 }
\label{Fig:lumps}
\end{figure}
\bibliographystyle{unsrt}
 \bibliography{references}
\appendix
\section{Numerical methods}
\subsection*{Direct numerical simulation of KP-1 equation} \label{methods:DNS_KP}
We solve Eq.~\eqref{Eq:KP} with $\sigma=1$ using a pseudo-spectral method with periodic boundary conditions on a $1024\times1024$ grid of collocation points. Time integration is performed using a standard fourth-order Runge–Kutta scheme, supplemented with a $1/3$ dealiasing rule. The time step is set to $\Delta t = 2\times10^{-4}$, which is smaller than the shortest linear timescale $\tau_L = 2\pi/\omega(\mathbf{k})$.
The initial condition is prescribed in Fourier space as
\[
\hat{u}(\mathbf{k},0)=10^{-1}\cdot \exp^{i\phi_{\bm{k}}}, 
\]
for $\bm{k}$ lying within a circular region defined by

\[
(k_x - 2)^2 + k_y^2 \leq 0.5^2
\]
and zero otherwise. Here, $\phi_{\mathbf{k}}$ are independent random phases uniformly distributed in $[0,2\pi)$. In other words, the initial energy is concentrated on a disk (circle) in Fourier space centered at $(2,0)$. 
\subsection*{Direct numerical simulations of the WKE}\label{methods:DNS_WKE}
Eq.~(\ref{Eq:wke_xieta}) is solved on a uniform grid over the domain $[\xi_{\min},\xi_{\max}]\times[\eta_{\min},\eta_{\max}]$, discretized with $512\times512$ modes. Since $\xi>\eta$ (equivalently $k_x>0$), modes satisfying $\eta>\xi$ are excluded from the computational domain. Time integration is carried out using a second-order Runge–Kutta scheme with a time step $\Delta t = 2\times10^{-4}$. All integrals are evaluated using the trapezoidal rule.

The simple structure of the resonant manifold eliminates the need for interpolation schemes—commonly required in other WKEs—and enables conservation of invariants to high accuracy. In all simulations, even at long times $t/\tau_{nl}\sim \mathcal{O}(10^{5})$, the relative error in conserved quantities (with respect to their initial values) remains of order $\sim 10^{-12}$.

As discussed in the main text, we consider Gaussian initial conditions supported on compact domains. For Run A, the support is a square region
\[
\mathscr{D}_0 = \{(\xi,\eta): 1.5 \leq \xi \leq 4.5,,-4.5 \leq \eta \leq -1.5\},
\]
with a Gaussian profile centered at $(\xi_0,\eta_0)=(3,-3)$ and widths $\Delta_\xi=\Delta_\eta=0.8$.

For Run B, the support is chosen to be a disk,
\[
\mathscr{D}_0 = \{(\xi,\eta):(\xi-3)^2 + (\eta+3)^2 \leq 1\},
\]
centered at $(3,-3)$. The initial Gaussian is again centered at $(3,-3)$ with $\Delta_\xi=\Delta_\eta=1.0$.

In both cases, the widths satisfy $\Delta < |\xi_0|, |\eta_0|$, ensuring that the initial distribution is well localized in spectral space.

\subsection*{Fitting for Rayleigh-Jeans}\label{methods:FIT_RJ}
The intergals
\[
%I_{2n+1}=\iint_{\mathscr{D}_0} 2(\xi-\eta)(\xi^{2n+1}-\eta^{2n+1}) n(\xi,\eta) d\xi d\eta=0, ~~n\in \mathbb{N},
I=\iint 2(\xi-\eta)(F(\xi)-F(\eta) )\,n(\xi,\eta,t) \,d\xi
\]
must be conserved for all times for arbitrary $F(\cdot)$. Out of the (functionally) infinite possibilities, a single choice of $F(\cdot)$ will correspond to an invariant which be equi-partitioned in the RJ spectrum Eq.~\eqref{Eq:rjsol} to which the system is relaxing. 

Our task here is devise a fitting procedure to find such $F(\cdot)$ in the special case when the initial data, and therefore the spectra at any $t>0$, are mirror symmetric with respect to $k_y$, i.e. $n(k_x,-k_y,t) = n(k_x,k_y,t)$. This means that $n(\xi,\eta)=n(-\eta, -\xi)$.

Respectively, the RJ solution Eq.~\eqref{Eq:rjsol} must also preserve this symmetry and, therefore,
\[
F(\xi)-F(\eta)=F(-\eta)-F(-\xi)
\]
for all $(\xi,\eta)\in  \mathscr{D}^{R}_0 \cup \mathscr{D}_1^{R} \cup \mathscr{D}_2^{R}$. 
For such symmetric data, $\mathscr{D}_0$ becomes a square,  and triangular supports are equal in size:
\[
\mathscr{D}^{R}_0
=
\left\{
(\xi,\eta) \in \mathbb{R}^2_{>}
\;\middle|\;
a \leq \xi \leq b,\; -b \leq \eta \leq -a,\; 
\right\},
\]
\[
\mathscr{D}^{R}_1
=
\left\{
(\xi,\eta) \in \mathbb{R}^2_{>}
\;\middle|\;
a \leq \xi \leq b,\; a \leq \eta \leq \xi,\; 
\right\},
\]
\[
\mathscr{D}^{R}_2
=
\left\{
(\xi,\eta) \in \mathbb{R}^2_{>}
\;\middle|\; -b\leq\eta \leq \xi , -b \leq\xi \leq -a
\right\}.
\]
%Since $\mathscr{D}^{R}_0$ is the enclosing rectangle, we know $\mathscr{D}^{R}_0=[a,b]\times[-b,-a]$ with the sides determined by the initial support $\mathscr{D}_0$, in fact $\mathscr{D}^{R}_0$ is a rectangle. 
For any $(\xi,\eta_0)\in\mathscr{D}^{R}_{0}$ we then have
\[
F(\xi)-F(\eta_0)=F(-\eta_0)-F(-\xi), ~~~~-b<\eta_0<-a~ \; \& \; ~ a<\xi<b,
\]
Then inside $\mathscr{D}^{R}_0$, for any $\eta$, we have
\[
F(-\eta)=-F(\eta)+F(\eta_0)+F(-\eta_0), ~~~~-b<\eta<-a
\]
Using  this relation, we can write for $(\xi,\eta)\in \mathscr{D}^{R}_0$:
\begin{equation}
  \label{Eq:RJfit}  
F(\xi)-F(\eta)=\tilde{F}(\xi)+\tilde{F}(-\eta)-\tilde{F}(-\eta_0),
\end{equation}
where we have introduced $\tilde{F}(\cdot)=F(\cdot)-F(\eta_0)$. Now from the simulations, for example taking the last snapshots, we have $n(\xi,\eta)$, and we will be looking for $F(\cdot)$ which would provide a fit $1/n(\xi,\eta) \myeq F(\xi)-F(\eta)$. We then choose an arbitrary $-b\leq\eta_0\leq-a$
and, using Eq.~\eqref{Eq:RJfit}, find
%\begin{equation}
%  \label{Eq:RJfit1}  
$\tilde{F}(\cdot)\myeq %\tilde{F}(\eta_0)-\tilde{F}(-\eta_0) 
1/n(\cdot,\eta_0)$ and
%\end{equation}
\begin{equation}
  \label{Eq:RJfit1} 
  F(\xi)-F(\eta) \myeq
  1/n(\xi,\eta_0) + 1/n(-\eta,\eta_0) - 1/n(-\eta_0,\eta_0).
\end{equation}
Thus, we have succeeded to fit the RJ spectrum $n_{RJ} (\xi,\eta) =1/(F(\xi)-F(\eta))$ in $\mathscr{D}^{R}_{0}$ using the data for the spectrum on a single slice $\eta=\eta_0, \, a<\xi<b$.

Now, for any $(\xi,\eta)\in\mathscr{D}^{R}_{1}$ we then have
\[
F(\xi)-F(\eta)=
\tilde F(\xi)- \tilde F(\eta)
, ~~~~a<\eta<\xi~ \; \& \; ~ a<\xi<b,
\]
where $\tilde F(\cdot)$ is defined with the same (negative)  $\eta_0$ as above. 
Then inside $\mathscr{D}^{R}_1$ we have
\begin{equation}
  \label{Eq:RJfit2} 
  F(\xi)-F(\eta) \myeq
  1/n(\xi,\eta_0) + 1/n(\eta,\eta_0).
\end{equation}
Again, to find the RHS we only need the spectrum on the same slice $\eta=\eta_0, \, a<\xi<b$.

Finally, the fit for  $(\xi,\eta)\in\mathscr{D}^{R}_{2}$
is obtained from the above fit for   $(\xi,\eta)\in\mathscr{D}^{R}_{1}$
via using the symmetry
$n(\xi,\eta)=n(-\eta, -\xi)$.
%\appendix
\section{Proofs for persistence of finite supports for Rayleigh-Jeans solutions}\label{methods:proof_fsupp}
In this section, we provide proofs of the theorems presented in the main text, along with several additional results corresponding to different choices of initial domains.
The overall strategy is as follows: we first analyze the primary resonances generated by the initial data, and then demonstrate that the secondary resonances arising from these interactions do not propagate beyond specified finite support.
It is useful to recall the resonant manifold solutions introduced in the main text:
\begin{align}
&\xi_1=\eta_2,\quad \xi_2=\xi,\quad \eta_1=\eta, \
&\xi_2=\eta_1,\quad \xi_1=\xi,\quad \eta_2=\eta.
\end{align}
Throughout this section, we focus on resonant triads of the first type, $\{(\xi,\eta),(\xi_1,\eta),(\xi,\xi_1)\}$. However, identical conclusions hold for the second class of solutions, as they are related by symmetry under the interchange of indices $1 \leftrightarrow 2$.
%\begin{t}[Confinement of spectral spreading to explicit resonant domains]
We first deal with the case of an initial support of rectangle, in the next sections we generalize to arbitary domains. We begin with a few definitions:
\begin{definition}{\label{def:supports}}
 Let
\[
\mathbb{R}^2_{>} := \{ (\xi,\eta) \in \mathbb{R}^2 \mid \xi > \eta \},
\]
and $\mathscr{D}^{R}_0\subset \mathbb{R}^2_{>}$ be a rectangular region in $\mathbb{R}^2_{>}$, that is,
\[
\mathscr{D}^{R}_0
=
\left\{
(\xi,\eta) \in \mathbb{R}^2_{>}
\;\middle|\;
a \leq \xi \leq b,\; c \leq \eta \leq d,\; a>d
\right\},
\]
Furthermore we also define the regions $\mathscr{D}^{R}_1$ and $\mathscr{D}^{R}_2$ as:
\[
\mathscr{D}^{R}_1
=
\left\{
(\xi,\eta) \in \mathbb{R}^2_{>}
\;\middle|\; a\leq\xi \leq b, c\leq\eta\leq\xi
\right\}.
\]
and
\[
\mathscr{D}^{R}_2
=
\left\{
(\xi,\eta) \in \mathbb{R}^2_{>}
\;\middle|\; c\leq\eta \leq \xi \leq d
\right\}.
\]
\end{definition}
Note that sets $\mathscr{D}^{R}_0$, $\mathscr{D}^{R}_1$ and $\mathscr{D}^{R}_2$ are the precise definitions of the shaded regions shown in Fig~\ref{Fig:finitesupport}.\\

Primary resonances stem from the initial support, they only possible resonances are given in the first of the inset in Fig~\ref{Fig:finitesupport}.\\ More formally we have the following lemma.

\begin{lemma}\label{lem:D0}
If two modes of a resonant triad  
\[\mathscr{R}\equiv\{(\xi,\eta),(\xi_1,\eta),(\xi,\xi_1)\}\subset\mathbb{R}^2_{>}\]
lie in $\mathscr{D}^{R}_{0}$, then the third mode necessarily lies in $\mathscr{D}^{R}_1 \cup \mathscr{D}^{R}_2$.\\
Moreover, the set of all such resulting modes spans the entire region $\mathscr{D}^{R}_1 \cup \mathscr{D}^{R}_2$. Specifically, if
\[\{(\xi_1,\eta_1),(\xi_2,\eta_2),(\xi_3,\eta_3)\}\]
denotes the three distinct modes of a resonant triad, then the collection of modes $(\xi_3,\eta_3)$ generated by all pairs $(\xi_1,\eta_1), (\xi_2,\eta_2) \in \mathscr{D}^{R}_0$ is given by
\[
\mathscr{F}\equiv\bigcup_{(\xi_1,\eta_1),(\xi_2,\eta_2)\in\mathscr{D}^{R}_0}(\xi_3,\eta_3)=\mathscr{D}^{R}_1\cup\mathscr{D}^{R}_2.
\]

\end{lemma}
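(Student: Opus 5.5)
The plan is a direct case analysis on which two of the three modes of the triad $\mathscr{R}=\{(\xi,\eta),(\xi_1,\eta),(\xi,\xi_1)\}$ lie in $\mathscr{D}^{R}_0$, followed by an explicit construction for the covering claim. Throughout we use the ordering $\xi\ge\xi_1\ge\eta$ from \eqref{eq:triad_ineq}, which is forced by all three modes lying in $\mathbb{R}^2_{>}$. We also use the defining property of $\mathscr{D}^{R}_0$ that its $\xi$-range $[a,b]$ and $\eta$-range $[c,d]$ are disjoint, with $a>d$. The second branch \eqref{Eq:rmsolutions1} follows by the $1\leftrightarrow2$ relabelling, as stated in the appendix, so only the first branch needs treatment.

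\textbf{Step 1 (third mode lies in the triangles).} There are three ways to choose the pair of modes lying in $\mathscr{D}^{R}_0$.
\begin{enumerate}[label=(\roman*)]
\item Suppose $(\xi,\eta),(\xi_1,\eta)\in\mathscr{D}^{R}_0$. Then $\xi,\xi_1\in[a,b]$, and together with $\xi_1\le\xi$ the third mode $(\xi,\xi_1)$ satisfies $a\le\xi_1\le\xi\le b$. Hence it lies in the upper triangle $\mathscr{D}^{R}_1$, the part of the strip $a\le\xi\le b$ lying above $\eta\ge a$.
\item Suppose $(\xi,\eta),(\xi,\xi_1)\in\mathscr{D}^{R}_0$. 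Then $\eta,\xi_1\in[c,d]$, and with $\eta\le\xi_1$ the third mode $(\xi_1,\eta)$ satisfies $c\le\eta\le\xi_1\le d$. This is exactly the definition of $\mathscr{D}^{R}_2$.
\item Suppose $(\xi_1,\eta),(\xi,\xi_1)\in\mathscr{D}^{R}_0$. Then $\xi_1$ would have to be both a $\xi$-coordinate in $[a,b]$ and an $\eta$-coordinate in $[c,d]$. This is impossible because $a>d$, so the case is empty.
\end{enumerate}
Degenerate triads with $\xi_1=\xi$ or $\xi_1=\eta$ produce a mode on the diagonal $\xi=\eta$, that is $k_x=0$. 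Such modes are excluded from $\mathbb{R}^2_{>}$ and can be discarded.

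\textbf{Step 2 (the image is all of $\mathscr{D}^{R}_1\cup\mathscr{D}^{R}_2$).} Given a target $(p,q)$ with $a\le q<p\le b$, fix any $\eta_*\in[c,d]$. Set $\xi=p$ and $\xi_1=q$; then the modes $(p,\eta_*)$ and $(q,\eta_*)$ lie in $\mathscr{D}^{R}_0$, and case (i) returns $(p,q)$. Given a target $(p,q)$ with $c\le q<p\le d$, fix any $\xi_*\in[a,b]$. Then the modes $(\xi_*,q)$ and $(\xi_*,p)$ lie in $\mathscr{D}^{R}_0$, and case (ii) with $\eta=q$, $\xi_1=p$ returns $(p,q)$. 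Combined with Step 1, these two constructions give $\mathscr{F}=\mathscr{D}^{R}_1\cup\mathscr{D}^{R}_2$.

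\textbf{Expected obstacle.} The only delicate point is bookkeeping, not analysis. One must match the image of case (i) to the intended set $\mathscr{D}^{R}_1$: as written in Definition~\ref{def:supports} with $c\le\eta\le\xi$, that set also contains $\mathscr{D}^{R}_0$. I would therefore state explicitly that the generated set is the triangle $\{a\le\eta\le\xi\le b\}$, the part of $\mathscr{D}^{R}_1$ outside $\mathscr{D}^{R}_0$, which is the red region of Fig.~\ref{fig1:subfig1}. I would also state clearly how the boundary diagonal $\xi=\eta$ is treated.
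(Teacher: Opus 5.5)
Your proposal is correct and follows the paper's proof essentially step for step. It uses the same three-way case split on which pair of modes lies in $\mathscr{D}^{R}_0$, with the mixed case $(\xi_1,\eta),(\xi,\xi_1)\in\mathscr{D}^{R}_0$ excluded by $a>d$. It then uses the same fixed-$\eta^{\ast}$ horizontal and fixed-$\xi^{\ast}$ vertical constructions to show that the image is all of $\mathscr{D}^{R}_1\cup\mathscr{D}^{R}_2$.

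Your bookkeeping remark is also right. The paper's own argument, and its later claim that the three regions are pairwise disjoint, only make sense with $\mathscr{D}^{R}_1=\{a\le\eta\le\xi\le b\}$, so the bound $c\le\eta\le\xi$ in the appendix definition of $\mathscr{D}^{R}_1$ is a typo.
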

\begin{proof}

For the first part of the proof we note that there are three possible configurations, the first is when  $(\xi,\xi_1)\in\mathscr{D}^{R}_0$ and $(\xi_1,\eta)\in\mathscr{D}^{R}_0$, which implies $a \le \xi_1 \le b$ and simultaneously $c \le \xi_1 \le d$. This is impossible as we have imposed $a>d$. Therefore, this configuration is geometrically ruled out. The second configuration is when $(\xi,\eta)\in\mathscr{D}^{R}_0$ and $(\xi_1,\eta)\in \mathscr{D}^{R}_0$. We have $a \leq \xi \leq b$ and $a \leq \xi_1 \leq b$. Moreover, since all modes are in $\mathbb{R}^2_{>}$ we have $\xi \geq \xi_1$ and therefore have 
\[
a \leq \xi < b, \quad a \leq \xi_1 \leq \xi,
\]
implying  $(\xi,\xi_1)\in\mathscr{D}^{R}_1$.
Similarly, for the final configuration, $(\xi,\eta)\in\mathscr{D}^{R}_0$ and $(\xi,\xi_1)\in\mathscr{D}^{R}_0$ and therefore
\[
c \leq \eta \leq d, \qquad c\leq\eta \leq \xi_1 \leq d
\]
implying  $(\xi_1,\eta)\in\mathscr{D}^{R}_2$.\\
This proves the first part of the lemma.\\

For the second part of the lemma, we need to show that the range of resonant modes generated by triads with two modes in $\mathscr{D}^{R}_{0}$ covers the entire region $\mathscr{D}^{R}_1 \cup \mathscr{D}^{R}_2$. To establish this, it suffices to first consider a resonant triad in which two modes lie on an arbitrary horizontal line segment within the square, with endpoints on its boundaries: consider resonant triads of the kind $\{(\xi,\eta^{\ast}),(\xi_1,\eta^{\ast}),(\xi,\xi_1)\}$ for an arbitrary fixed $\eta^{\ast}$. We already know if 
such that  $(\xi,\eta^{\ast})\in\mathscr{D}^{R}_0$ and $(\xi_1,\eta^{\ast})\in\mathscr{D}^{R}_0$, we know that $(\xi,\xi_1)\in\mathscr{D}^{R}_1$. Moreover, for all possible $(\xi,\xi_1)$ form the set
\[
\mathscr{F}_1\equiv\bigcup_{\substack{\xi\in[a,b]\\{\xi_1\in[a,\xi]}}} (\xi,\xi_1)=\mathscr{D}^{R}_1
\]\\
the couple $(\xi,\eta^{\ast})$,$(\xi_1,\eta^{\ast})$  is in $\mathscr{D}^{R}_0$.

Note that since $\eta^{\ast}$ is arbitrary this is valid for every horizontal line segment with endpoints on its boundaries of the square. Therefore, each of such line segments generates the entire region $\mathscr{D}^{R}_1$ by triad interactions. An extension of this for arbitrary line segments is dealt with later. \\
Similarly, if we consider triads consider a resonant triad in which two modes lie on an arbitrary vertical line segment$\{(\xi^{\ast},\eta),(\xi^{\ast},\xi_1),(\xi_1,\eta)\}$
\[
\mathscr{F}_2\equiv\bigcup_{\substack{\xi_1\in[\eta,d]\\{\eta\in[c,d]}}} (\xi_1,\eta)=\mathscr{D}^{R}_2.
\]
For any triad \[\{(\xi_1,\eta_1),(\xi_2,\eta_2),(\xi_3,\eta_3)\}\]
denoting the three distinct modes of a resonant triad, the collection of modes $(\xi_3,\eta_3)$ generated by all pairs $(\xi_1,\eta_1), (\xi_2,\eta_2) \in \mathscr{D}^{R}_0$ is $\mathscr{F}$ and since 
$\mathscr{F}_1\subset\mathscr{F}$ and $\mathscr{F}_2\subset\mathscr{F}$ we have 
\[
\mathscr{F}=\mathscr{D}^{R}_1\cup\mathscr{D}^{R}_2.
\]

\end{proof}
We now turn to the secondary resonances that arise when the interacting modes are allowed to lie in the extended set $\mathscr{D}^{R}_{0} \cup \mathscr{D}^{R}_{1} \cup \mathscr{D}^{R}_{2}$. 
As a consequence, we obtain the following lemma.
\begin{lemma}\label{lem:D012}
If two modes of the resonant triad  $\{(\xi,\eta),(\xi,\xi_1),(\xi_1,\eta)\}$ are in $\mathscr{D}^{R}_{0}\cup\mathscr{D}^{R}_{1}\cup\mathscr{D}^{R}_{2}$, then the third mode lies in $\mathscr{D}^{R}_0\cup\mathscr{D}^{R}_1\cup\mathscr{D}^{R}_2$
\end{lemma}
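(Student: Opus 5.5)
The plan is to replace the case analysis by a single coordinate-wise description of the union, after which the lemma follows almost immediately. Set $S := [c,d]\cup[a,b]$. Since $a>d$, this is a disjoint union of two intervals. The claim to establish first is
\[
\mathscr{U}:=\mathscr{D}^{R}_0\cup\mathscr{D}^{R}_1\cup\mathscr{D}^{R}_2=\{(\xi,\eta)\in\mathbb{R}^2_{>}\;:\;\xi\in S,\ \eta\in S\}.
\]
This holds when $\mathscr{D}^{R}_1$ is read as the triangle $a\le\eta\le\xi\le b$, as in Fig.~\ref{fig1:subfig1} and the main text. The verification runs over the three ways to place $\xi,\eta$ in $S$ with $\xi>\eta$:
\begin{itemize}
\item both coordinates in $[a,b]$ gives $\mathscr{D}^{R}_1$;
\item both in $[c,d]$ gives $\mathscr{D}^{R}_2$;
\item $\xi\in[a,b]$ and $\eta\in[c,d]$ gives $\mathscr{D}^{R}_0$.
\end{itemize}
The fourth placement, $\xi\in[c,d]$ with $\eta\in[a,b]$, is impossible because it would force $\xi<\eta$.

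Here the definition needs care. Read literally, Definition~\ref{def:supports} lets $\mathscr{D}^{R}_1$ contain points $(\xi,\eta)$ with $\eta\in(d,a)$. Under that reading the lemma fails. Take $(\xi,c)\in\mathscr{D}^{R}_0$ and $(\xi,\xi_1)$ with $\xi\in[a,b]$ and $\xi_1\in(d,a)$. The third mode of the triad is then $(\xi_1,c)$, and since $\xi_1\in(d,a)$ lies outside both $[c,d]$ and $[a,b]$, this mode is in none of the three regions. The proof will therefore explicitly adopt the triangular reading $\mathscr{D}^{R}_1=\{a\le\eta\le\xi\le b\}$. This is also consistent with the second part of Lemma~\ref{lem:D0}, where $\mathscr{F}_1$ consists precisely of the points $(\xi,\xi_1)$ with $a\le\xi_1\le\xi\le b$.

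With this characterization, the closure argument is purely combinatorial. A triad of the first type is $\{(\xi,\eta),(\xi,\xi_1),(\xi_1,\eta)\}$ with $\eta\le\xi_1\le\xi$, by \eqref{eq:triad_ineq}. Its three modes use only the three numbers $\xi,\xi_1,\eta$, and each number appears in exactly two of the modes:
\begin{itemize}
\item $\xi$ appears in $(\xi,\eta)$ and $(\xi,\xi_1)$;
\item $\eta$ appears in $(\xi,\eta)$ and $(\xi_1,\eta)$;
\item $\xi_1$ appears in $(\xi,\xi_1)$ and $(\xi_1,\eta)$.
\end{itemize}
Hence any two modes of the triad together contain all three numbers among their coordinates. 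If those two modes lie in $\mathscr{U}$, all of $\xi,\xi_1,\eta$ belong to $S$. The third mode then has both coordinates in $S$. Its coordinates are ordered first greater than second automatically, because it is a member of the triad and the triad lies in $\mathbb{R}^2_{>}$. By the characterization above, the third mode lies in $\mathscr{U}$.

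The second triad type, \eqref{Eq:rmsolutions1}, is obtained by the relabelling $1\leftrightarrow2$ and needs no separate argument. The only real obstacle is the first step: fixing the correct definition of $\mathscr{D}^{R}_1$ and proving the set identity for $\mathscr{U}$, including handling the boundary points where $\xi=\eta$ is excluded. Once that identity is in place, the three-configuration case split used in the proof of Lemma~\ref{lem:D0} is unnecessary.
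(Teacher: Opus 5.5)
Your proof is correct, but it takes a different route from the paper's. The paper argues by cases on which pair of regions contains the two given modes: $\mathscr{D}^{R}_1$ with $\mathscr{D}^{R}_1$, $\mathscr{D}^{R}_2$ with $\mathscr{D}^{R}_2$, $\mathscr{D}^{R}_0$ with $\mathscr{D}^{R}_2$, and $\mathscr{D}^{R}_0$ with $\mathscr{D}^{R}_1$. Two modes in $\mathscr{D}^{R}_0$ is covered by Lemma~\ref{lem:D0}. Within each case it checks the defining inequalities for each choice of the two given modes, and only cases (i) and (iii) are written out.

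You instead prove once that $\mathscr{D}^{R}_0\cup\mathscr{D}^{R}_1\cup\mathscr{D}^{R}_2=(S\times S)\cap\mathbb{R}^2_{>}$. You then observe that any two modes of a triad jointly carry all of $\xi,\xi_1,\eta$. This makes the argument exhaustive by construction, which the paper's written case split is not:
\begin{itemize}
\item It never mentions the pairing of $\mathscr{D}^{R}_1$ with $\mathscr{D}^{R}_2$. That case is vacuous, since two triad modes always share a coordinate while $[a,b]\cap[c,d]=\emptyset$.
\item In case (iii) it claims a single configuration. In fact $(\xi,\xi_1)\in\mathscr{D}^{R}_0$, $(\xi_1,\eta)\in\mathscr{D}^{R}_2$ is a second one.
\end{itemize}
Neither omission changes the conclusion. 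Your argument also shows, more generally, that $(S\times S)\cap\mathbb{R}^2_{>}$ is closed under triad completion for any $S\subset\mathbb{R}$.

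The case analysis does record which region the third mode falls into, which the paper uses for its transfer picture. Your framework recovers this immediately. Label each of $\eta\le\xi_1\le\xi$ as lying in $[c,d]$ or $[a,b]$. Since $d<a$, the only possible patterns are: all low; only $\xi$ high; only $\eta$ low; all high. These four patterns reproduce the first part of Lemma~\ref{lem:D0} and cases (i)--(iv) at once.

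Your triangular reading $a\le\eta\le\xi\le b$ of $\mathscr{D}^{R}_1$ is the one the paper's own proof uses; its case (i) invokes $a\le\eta$. Your counterexample correctly shows that the literal wording of the definition would make the lemma false.
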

\begin{proof}
 We first list all possible configurations:
\begin{enumerate}[label=(\roman*)]
   % \item If two modes of the triad belong to $\mathscr{D}^{R}_0$ then the third mode must belong to $\mathscr{D}^{R}_1\cup\mathscr{D}^{R}_2$;
    \item If two modes of the triad belong to $\mathscr{D}^{R}_1$ then the third mode must belong to $\mathscr{D}^{R}_1$;
    \item If two modes of the triad belong to $\mathscr{D}^{R}_2$ then the third mode must belong to $\mathscr{D}^{R}_2$;
    \item If one mode belongs to $\mathscr{D}^{R}_0$ and the other to $\mathscr{D}^{R}_2$, then the third mode must belong to $\mathscr{D}^{R}_0$;
    \item If one mode belongs to $\mathscr{D}^{R}_0$ and the other to $\mathscr{D}^{R}_1$, then the third mode must belong to $\mathscr{D}^{R}_0$;
    
\end{enumerate}
The proofs of all the above results are elementary and follow from the defining inequalities of the regions $\mathscr{D}^{R}_1$, $\mathscr{D}^{R}_2$, and $\mathscr{D}^{R}_0$. We give just the proofs for (i) and (iii). 
For Case~(i), again as before, there are three possible configurations to select the two resonant modes. Consider the configuration in which $(\xi,\xi_1)\in\mathscr{D}^{R}_1$ and $(\xi_1,\eta)\in\mathscr{D}^{R}_1$, then the defining inequalities for $\mathscr{D}^{R}_1$ imply 
\[
a\leq\xi\leq b; \quad a\leq \xi_1\leq \xi; \quad a\leq \eta\leq \xi_1
\]
It follows immediately that
\[
a \leq \xi \leq b, \quad \quad a \leq \eta \leq \xi,
\]
and hence $(\xi,\eta) \in \mathscr{D}^{R}_1$. For configuration 2 we have $(\xi,\eta)$ and $(\xi_1,\eta)$ belong to $\mathscr{D}^{R}_1$, then 
\[
a\leq \xi\leq b; \qquad a\leq \xi_1\leq b
\]
and since $(\xi,\xi_1)\in \mathbb{R}^2_{>}$, we have
\[
a\leq \xi\leq b, \quad a\leq \xi_1\leq \xi,
\]
 therefore $(\xi,\xi_1)\in \mathscr{D}^{R}_1$. Similarly, if $(\xi,\eta)$ and $(\xi,\xi_1)$ belong to $\mathscr{D}^{R}_1$, then $(\xi_1,\eta)\in\mathscr{D}^{R}_1$. An analogous argument applies to Case~(ii). For case (iii), once again the only possible configuration is $(\xi,\eta)\in\mathscr{D}^{R}_0$ and $(\xi_1,\eta)\in\mathscr{D}^{R}_2$, which gives us
 \[
    c\leq \eta\leq \xi_1\leq d,\quad a\leq \xi\leq b
 \]
 and therefore, $(\xi,\xi_1)\in \mathscr{D}^{R}_0$. Similar arguments also prove case~(iv).
 \end{proof}
  \begin{figure}[!]
\centering
\includegraphics[width=\linewidth]{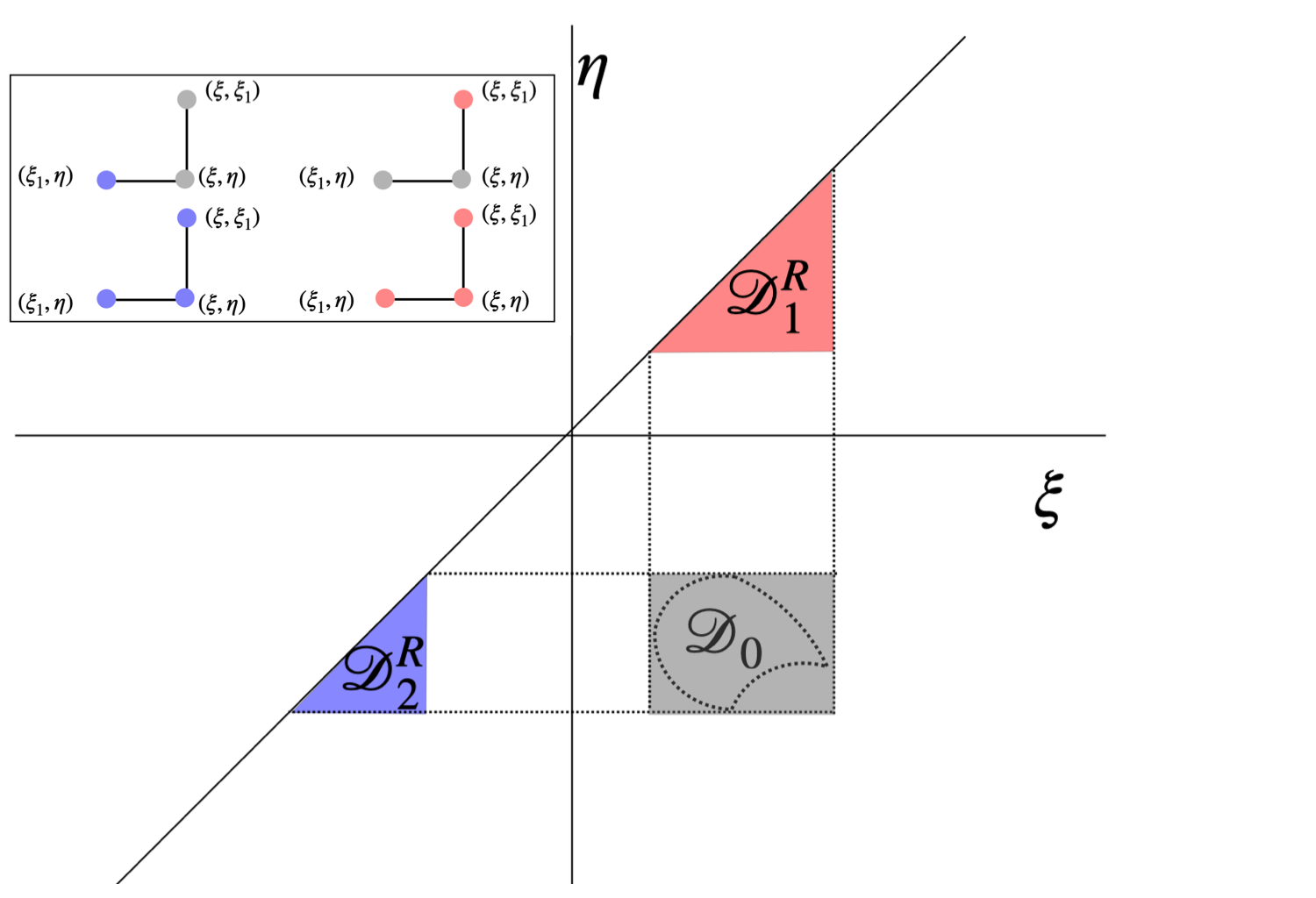}
\caption{ Schematic of a rectangular domain $\mathscr{D}^{R}_0$ (shaded gray), and corresponding triangles $\mathscr{D}^{R}_1$ and $\mathscr{D}^{R}_2$. The inset gives the possible primary and secondary resonances.
}
\label{Fig:finitesupport}
\end{figure}
%%%%%%%%%%%%%%%%

\newpage
 Now we can prove the following theorem

\begin{theorem}\label{th:fsuppsquare}
Let $n(\xi,\eta,t)$ be a solution of a wave kinetic equation
in $\mathbb{R}^2_{>}$. If the initial wave action has bounded support
\[
\operatorname{supp} n(\cdot,0) = \mathscr{D}^{R}_0 \subset \mathbb{R}^2_{>},
\]
 Then for all $t \in \mathbb{R}^+$,
\[
\operatorname{supp} n(\cdot,t)
=
\mathscr{D}^{R}_0\cup \mathscr{D}^{R}_1 \cup \mathscr{D}^{R}_2 .
\]

\end{theorem}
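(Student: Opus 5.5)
The proof has two halves. The inclusion $\operatorname{supp} n(\cdot,t)\subseteq \mathscr{D}^{R}_0\cup\mathscr{D}^{R}_1\cup\mathscr{D}^{R}_2=:\mathscr{D}$ follows from Lemma~\ref{lem:D012}. The reverse inclusion follows from Lemma~\ref{lem:D0} together with a short-time expansion of the WKE~\eqref{Eq:wke_xieta}.

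\emph{Invariance of $\mathscr{D}$.} Take a point $(\xi,\eta)\notin\mathscr{D}$ and look at the right-hand side of \eqref{Eq:wke_xieta}. Every quadratic term there has one of two forms.

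First, there are gain terms $n(\lambda,\eta)n(\xi,\lambda)$, and their analogues in the second and third integrals. These involve the two other modes of a triad containing $(\xi,\eta)$. By Lemma~\ref{lem:D012}, if both of those modes lay in $\mathscr{D}$, then $(\xi,\eta)$ would lie in $\mathscr{D}$ too. So at least one factor is evaluated outside $\mathscr{D}$.

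Second, there are loss terms, which carry the factor $n(\xi,\eta)$ itself.

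Define $m(t)=\sup_{\mathbb{R}^2_{>}\setminus\mathscr{D}}|n(\cdot,t)|$, restricted to a bounded region containing all points reachable from $\mathscr{D}$, so that the kernels are bounded. The two observations above give a Gronwall-type inequality
\[
\frac{dm}{dt}\le C\,\big(\sup_{s\le t}\|n(\cdot,s)\|_\infty\big)\,m(t).
\]
Since $m(0)=0$, this forces $m\equiv 0$ on any interval on which the solution stays bounded. I would state this at the level of classical, locally bounded solutions, uniqueness being implicitly assumed in the theorem.

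The step I expect to be the main obstacle is the cutoff needed to make $m$ finite and the kernels bounded. The triad weights grow polynomially in $\lambda$, and the third integral runs over $\lambda\in(\xi,\infty)$, which is unbounded. I would handle this by exploiting monotonicity: every triad is an ``inverted L'' $\xi\ge\xi_1\ge\eta$, so the coordinates of a point produced from points in a box $[-R,R]^2$ stay in that box. One can therefore run the argument on $\mathscr{D}\cup\big([-R,R]^2\cap\mathbb{R}^2_{>}\big)$ for each $R$ and then let $R\to\infty$.

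\emph{Filling of $\mathscr{D}$.} For the equality, it must be shown that $n(\cdot,t)>0$ on $\mathscr{D}$ (up to boundary/measure-zero sets) for every $t>0$.

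On $\mathscr{D}^{R}_0$, where $n(\cdot,0)>0$, continuity keeps $n$ positive for short times. Positivity then persists for all times because the loss terms are linear in $n(\xi,\eta)$: writing $\partial_t n=G-n\,L$ with $G\ge0$ gives $n(t)\ge n(0)e^{-\int_0^t L}>0$.

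On $\mathscr{D}^{R}_1\cup\mathscr{D}^{R}_2$, the solution starts at $n=0$, so the loss term vanishes there initially. Hence $\partial_t n|_{t=0}$ equals the gain integral alone. For a point $(\xi,\xi_1)\in\mathscr{D}^{R}_1$, the second part of Lemma~\ref{lem:D0} guarantees a whole horizontal family of triads $\{(\xi,\eta^\ast),(\xi_1,\eta^\ast),(\xi,\xi_1)\}$ whose two other modes lie in $\mathscr{D}^{R}_0$. These triads contribute a strictly positive integral of products of positive values of $n(\cdot,0)$, weighted by the positive kernel $(\xi-\eta)(\lambda-\eta)^2(\xi-\lambda)^2$. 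So $\partial_t n>0$ at $t=0^+$, and the same representation $n(t)\ge\int_0^t G\,e^{-\int_s^t L}\,ds>0$ keeps $n$ positive for all later times. The same argument applies to $\mathscr{D}^{R}_2$ using the vertical families.

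Combining the two halves yields $\operatorname{supp}n(\cdot,t)=\mathscr{D}^{R}_0\cup\mathscr{D}^{R}_1\cup\mathscr{D}^{R}_2$ for all $t>0$.
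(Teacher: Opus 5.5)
Your route is the paper's: confinement to $\mathscr{D}:=\mathscr{D}^{R}_0\cup\mathscr{D}^{R}_1\cup\mathscr{D}^{R}_2$ comes from Lemma~\ref{lem:D012}, and filling comes from the second part of Lemma~\ref{lem:D0}. The paper only says that a mode ``can get excited only through two nonzero partners''. Your representation $n(t)\ge n(0)e^{-\int_0^t L}$ and $n(t)\ge\int_0^t G\,e^{-\int_s^t L}\,ds$, which holds whatever the sign of $L$, actually proves the filling half. One small point: it needs $n(\cdot,0)>0$ on the open rectangle, not merely $\operatorname{supp}n(\cdot,0)=\mathscr{D}^{R}_0$, to guarantee $\int_c^d n(\xi,\eta^\ast,0)\,n(\xi_1,\eta^\ast,0)\,d\eta^\ast>0$.

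The step that fails as written is the cutoff in your confinement half.
\begin{itemize}
\item The box $B_R=[-R,R]^2\cap\mathbb{R}^2_{>}$ is closed under triads only in the forward sense: two modes in $B_R$ put the third in $B_R$.
\item A Gronwall bound on $m_R(t)=\sup_{B_R\setminus\mathscr{D}}|n(\cdot,t)|$ needs the backward property: the collision integral at a point of $B_R$ should involve only values of $n$ in $B_R$.
\item That backward property is false. At $(\xi,\eta)\in B_R$, the third integral in \eqref{Eq:wke_xieta} contains $n(\lambda,\eta)\,n(\lambda,\xi)$ for all $\lambda>\xi$, and the second contains $n(\eta,\lambda)\,n(\xi,\lambda)$ for all $\lambda<\eta$. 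For $|\lambda|>R$ both factors lie outside $B_R$, are not controlled by $m_R$, and carry weights of order $\lambda^4$.
\item Appealing to invariance of $B_R$ to discard these terms is circular, since it is the same confinement claim, now for an unbounded complement.
\item Taking the sup over all of $\mathbb{R}^2_{>}\setminus\mathscr{D}$ does not close either. The kernel weights at $(\xi,\eta)$ grow polynomially in $|\xi|+|\eta|$ even when $n$ is compactly supported.
\end{itemize}

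You need to fix the solution class, and there are two ways to do it.

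\emph{Option 1.} Assume $\operatorname{supp}n(\cdot,t)\subset B$ for one bounded set $B$ on each interval $[0,T]$. Then your Gronwall argument on $B\setminus\mathscr{D}$ closes, with constants depending on $B$.

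\emph{Option 2.} Use the uniqueness you already invoke and drop Gronwall altogether. Write $Q[n]$ for the right-hand side of \eqref{Eq:wke_xieta}, and solve $\partial_t n=Q[n\mathbf{1}_{\mathscr{D}}]$ on the compact set $\mathscr{D}$. Only $\lambda$ in the coordinate range $[c,b]$ of $\mathscr{D}$ enters, so the kernel is bounded and Picard iteration applies. Extend the result by zero. By Lemma~\ref{lem:D012}, every term of $Q$ at a point outside $\mathscr{D}$ contains a vanishing factor, and at points of $\mathscr{D}$ the discarded terms vanish. So the extension solves the full WKE with the same data, and by uniqueness it equals $n$.
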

\begin{proof}
 The proof follows directly from lemmas ~\ref{lem:D0} and~\ref{lem:D012}. Suppose that at the initial time $t=0$, 
\[
n(\xi,\eta) = 0 \quad  (\xi,\eta) \in \mathbb{R}^2_{>}.
\]  
The only way this mode can get excited is through interactions with two other resonant modes for which the wave action is non-zero. At $t=0$, we assume that the support of $n$ is 
\[
\operatorname{supp} n(\cdot,0) = \mathscr{D}^{R}_0 \subset \mathbb{R}^2_{>},
\]  
so any two non-zero modes initially lie in $\mathscr{D}^{R}_0$. Therefore, by Lemma~\ref{lem:D0}, for small times $t$, the support evolves as
\[
\operatorname{supp} n(\cdot,t) \subset  \mathscr{D}^{R}_1 \cup \mathscr{D}^{R}_2.
\]  
Furthermore, Lemma~\ref{lem:D012} ensures that secondary interactions among modes in $\mathscr{D}^{R}_0 \cup \mathscr{D}^{R}_1 \cup \mathscr{D}^{R}_2$ do not generate new regions. Consequently, this containment holds for all times:
\[
\operatorname{supp} n(\cdot,t)= \mathscr{D}^{R}_0 \cup \mathscr{D}^{R}_1 \cup \mathscr{D}^{R}_2.
\]  

\end{proof}
The following remarks are in order: for regions defined in def.~\ref{def:supports}, the regions are pairwise disjoint, that is, 
\[
\mathscr{D}^{R}_0 \cap \mathscr{D}^{R}_1 = \mathscr{D}^{R}_0 \cap \mathscr{D}^{R}_2 = \mathscr{D}^{R}_1 \cap \mathscr{D}^{R}_2 = \emptyset.
\]
This behavior arises from the assumption (a>d), under which the corresponding regions remain disjoint. When this condition is relaxed, the regions overlap and the gaps in the wave-action spectrum disappear. In this case, additional resonant triads become accessible: if two modes of a triad lie within $\mathscr{D}_0^{R}$, then the third mode must also lie within $\mathscr{D}_0^{R}$. Nevertheless, the theorem stated above remains valid, with a redefined support $\mathscr{D}^{\ast}$:
\[
\operatorname{supp} n(\cdot,t) \subset  \mathscr{D}^{\ast}\equiv
\left\{
(\xi,\eta) \in \mathbb{R}^2_{>}
\;\middle|\; d\leq\xi\leq b, d\leq\eta\leq\xi
\right\}.
\]

Now in general $\mathscr{D}_0$ does not have to be a square but any compact subset of $\mathbb{R}^2_{>}$ with a boundary $\partial\mathscr{D}_0\subset\mathbb{R}^2_{>}$. In such cases, the proof has more steps, but straightforward. We prove it first for a connected compact domain, $\mathscr{D}_0$.\\ 
\begin{definition}\label{def:L_seg}
A vertical line segment (parallel to $\xi=0$) is defined by
\[
V(\xi_0; b_1,b_2)
:= \{(\xi_0,\eta)\in\mathbb{R}^2_{>} \mid b_1 \leq \eta \leq b_2\}.
\]

A horizontal line segment (parallel to $\eta=0$) is defined by
\[
H(\eta_0; b_1,b_2) 
:= \{(\xi,\eta_0)\in\mathbb{R}^2_{>} \mid b_1 \leq \xi \leq b_2\}.
\]
\end{definition}

\begin{lemma}{\label{lem:dense_L}}
Let $\mathscr{D}_0\subset\mathbb{R}^2_>$ be a compact connected domain in $\mathbb{R}^2_{>}$, with $\partial\mathscr{D}_0$ defining the boundary points. If two modes of the resonant triad  $\mathscr{R}\equiv\{(\xi,\eta),(\xi,\xi_1),(\xi_1,\eta)\}\subset\mathbb{R}^2_{>}$ lie in $V(\xi_0; b_1,b_2)\subset\mathscr{D}_{0}$, then the third mode lies in
\[
\mathscr{P}_1=
\left\{
(\xi,\eta) \in \mathbb{R}^2_{>}
\;\middle|\; b_1\leq \eta < \xi \leq b_2<\xi_0
\right\}.
\]

\end{lemma}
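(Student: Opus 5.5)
The plan is to argue as in the proof of Lemma~\ref{lem:D0}: enumerate the three ways of choosing two modes of $\mathscr{R}$ to lie on $V(\xi_0;b_1,b_2)$, and read off the location of the third mode from the defining inequalities. Two facts do all the work. First, every point of $V(\xi_0;b_1,b_2)$ has first coordinate equal to $\xi_0$. Second, every point of $\mathscr{R}$ lies in $\mathbb{R}^2_{>}$, which gives the strict ordering $\eta<\xi_1<\xi$ (the strict version of \eqref{eq:triad_ineq}). We also use that $V(\xi_0;b_1,b_2)\subset\mathscr{D}_0\subset\mathbb{R}^2_{>}$. Applied to the endpoint $(\xi_0,b_2)$, this gives $b_2<\xi_0$, which is the last inequality in the definition of $\mathscr{P}_1$.

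\emph{Case 1: $(\xi,\eta)$ and $(\xi,\xi_1)$ lie on $V$.} Both modes share the first coordinate $\xi$, so $\xi=\xi_0$. Their second coordinates satisfy $b_1\le \eta\le b_2$ and $b_1\le \xi_1\le b_2$. Together with $\eta<\xi_1$, this shows that the third mode $(\xi_1,\eta)$ satisfies $b_1\le\eta<\xi_1\le b_2<\xi_0$. Hence $(\xi_1,\eta)\in\mathscr{P}_1$.

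\emph{Case 2: the pair contains $(\xi_1,\eta)$ together with either $(\xi,\eta)$ or $(\xi,\xi_1)$.} Both chosen modes lie on $V$, so their first coordinates coincide and $\xi_1=\xi=\xi_0$. If the partner is $(\xi,\eta)$, the third mode is $(\xi,\xi_1)=(\xi_0,\xi_0)$. If the partner is $(\xi,\xi_1)=(\xi_0,\xi_0)$, then that mode itself is assumed to lie on $V$. Either way the triad contains the diagonal point $(\xi_0,\xi_0)\notin\mathbb{R}^2_{>}$, contradicting $\mathscr{R}\subset\mathbb{R}^2_{>}$ (equivalently, $\xi_1<\xi$ fails). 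So these configurations are geometrically excluded, exactly as the first configuration in the proof of Lemma~\ref{lem:D0} was excluded.

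Finally, the second branch \eqref{Eq:rmsolutions1} of the resonant manifold is obtained from the first by the relabelling $1\leftrightarrow2$, so it produces the same triads and the same conclusion. The only mildly delicate points are keeping track of strict versus non-strict inequalities, so that $\mathscr{P}_1$ comes out with $\eta<\xi$ and $b_2<\xi_0$, and checking that the degenerate Case~2 configurations are ruled out by the diagonal $\xi=\eta$ ($k_x=0$) being excluded from $\mathbb{R}^2_{>}$. No analytic difficulty is expected.
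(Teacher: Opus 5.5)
Your proposal is correct and follows the paper's proof. The paper likewise notes that the only admissible configuration puts $(\xi_0,\eta)$ and $(\xi_0,\xi_1)$ on the segment, and reads off $b_1\le\eta\le\xi_1\le b_2<\xi_0$ for the third mode $(\xi_1,\eta)$. Your explicit exclusion of the other two pairings via the diagonal $\xi=\eta$, and your care with the strict inequality $\eta<\xi_1$, fill in details the paper leaves implicit. The same goes for $b_2<\xi_0$; since $V$ is defined as a subset of $\mathbb{R}^2_{>}$, this bound is most cleanly obtained from the closedness of $\mathscr{D}_0$ rather than from the endpoint.
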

\begin{proof}
    Given that two resonant modes lie in $V(\xi_0; b_1,b_2)$, which is a vertical line segment, the only resonances allowed are the kind $\{(\xi_0,\eta),(\xi_0,\xi_1),(\xi_1,\eta)\}$, which directly implies 
    \[
    b_1\leq\eta\leq\xi_1\leq b_2<a,
    \]
    therefore, $(\xi_1,\eta)\in\mathscr{P}_1$.
\end{proof}
The minimum enclosing rectangle corresponding to $\mathscr{D}_0$ is defined as follows
\begin{definition}
    Let 
    \[
    \xi_{\min}=\inf \{\xi:~(\xi,\eta)\in\mathscr{D}_0\},\quad \xi_{\max}=\sup \{\xi:~(\xi,\eta)\in\mathscr{D}_0\},\quad
    \]
    \[
    \eta_{\min}=\inf \{\eta:~(\xi,\eta)\in\mathscr{D}_0\},\quad \eta_{\max}=\sup \{\eta:~(\xi,\eta)\in\mathscr{D}_0\},\quad
    \]
    then, the smallest rectangle enclosing $\mathscr{D}_0$ is 
    \[
    \mathscr{D}^{R}_0=\{(\xi,\eta)\in\mathbb{R}^2_{>}\mid \xi_{\min} \leq \xi \leq \xi_{\max},\; \eta_{\min} \leq \eta \leq \eta_{\max}\}.
    \]
    Like before, we also define $\mathscr{D}^{R}_1$ and $\mathscr{D}^{R}_2$ as follows:
    \[
    \mathscr{D}^{R}_1 =
\left\{
(\xi,\eta) \in \mathbb{R}^2_{>}
\;\middle|\;
\xi_{\min} \leq \xi \leq \xi_{\max},\; \xi_{\min} \leq \eta \leq \xi
\right\},
\]
and
\[
\mathscr{D}^{R}_2
=
\left\{
(\xi,\eta) \in \mathbb{R}^2_{>}
\;\middle|\; \eta_{\min}\leq\eta \leq \xi \leq\eta_{\max}
\right\}.
\]
\end{definition}

\begin{theorem}{\label{th:compD0}}

Let $n(\xi,\eta,t)$ be a solution of a wave kinetic equation
in $\mathbb{R}^2_{>}$. Let
\[
\operatorname{supp} n(\cdot,0) = \mathscr{D}_0 \subset \mathbb{R}^2_{>},
\]
where, $\mathscr{D}_0$ is compact connected domain with no cut points.
Then for $t > 0$, 
\[
\operatorname{supp} n(\cdot,t)
=
{\mathscr{D}^{R}_0} \cup \mathscr{D}^{R}_1 \cup \mathscr{D}^{R}_2 .
\]
\end{theorem}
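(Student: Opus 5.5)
The argument has two parts. First, the set $\mathscr{D}^{R}_0\cup\mathscr{D}^{R}_1\cup\mathscr{D}^{R}_2$ is invariant under resonant triads. Second, every point of it is actually reached for any $t>0$.

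For invariance, note that $\mathscr{D}_0\subset\mathscr{D}^{R}_0$, so the support is contained in the rectangle case with $a=\xi_{\min}$, $b=\xi_{\max}$, $c=\eta_{\min}$, $d=\eta_{\max}$. Lemma~\ref{lem:D0} and Lemma~\ref{lem:D012} apply verbatim, since their proofs use only the defining inequalities of the three regions, never the fact that the initial data fill the rectangle. So any triad with two modes in the union has its third mode there too. To pass from this triad-closure to a statement about the support, I would use the structure of Eq.~\eqref{Eq:wke_xieta}. At a point outside the union, every product $n(\cdot)n(\cdot)$ in the gain term involves at least one mode outside the union. Every loss term is proportional to $n(\xi,\eta)$. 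Hence the restriction of $n$ to the complement satisfies a linear homogeneous equation with zero initial data, and a Gronwall estimate (assuming the solution is bounded) keeps it zero. This gives $\operatorname{supp} n(\cdot,t)\subset\mathscr{D}^{R}_0\cup\mathscr{D}^{R}_1\cup\mathscr{D}^{R}_2$. If $\xi_{\min}\le\eta_{\max}$ the regions overlap, and I would replace the union by $\mathscr{D}^\ast$ as in the earlier remark.

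For filling, I would proceed in stages.

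\emph{Stage 1 (triangles from primary resonances).} Take a vertical segment $V(\xi_0;b_1,b_2)\subset\mathscr{D}_0$. By Lemma~\ref{lem:dense_L}, triads with two legs on it generate the triangle $\{b_1\le\eta<\xi\le b_2\}$. Horizontal segments similarly generate triangles of the form $\{b_1\le\eta<\xi\le b_2\}$ in the $\xi$-range. Since $\mathscr{D}_0$ is connected, its projection onto the $\eta$-axis is the full interval $[\eta_{\min},\eta_{\max}]$, and its projection onto the $\xi$-axis is $[\xi_{\min},\xi_{\max}]$. I would cover each projection by the projections of the vertical (respectively horizontal) chords of $\mathscr{D}_0$. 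The generated triangles then have overlapping bases, and Lemma~\ref{lem:D012}(ii) (closure of a triangle region) lets overlapping sub-triangles merge into the full $\mathscr{D}^{R}_2$. Two triangles sharing a base subinterval produce, through triads with one leg in each, every point whose $\eta$ lies in one base and whose $\xi$ lies in the other. Iterating along a chain of overlapping intervals fills $\mathscr{D}^{R}_2$, and likewise $\mathscr{D}^{R}_1$.

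\emph{Stage 2 (rectangle).} Use case (iii): pair a mode $(\xi,\eta)\in\mathscr{D}_0$ with $(\xi_1,\eta)\in\mathscr{D}^{R}_2$ to produce $(\xi,\xi_1)\in\mathscr{D}^{R}_0$. As $\xi$ ranges over the $\xi$-projection of $\mathscr{D}_0$, which is all of $[\xi_{\min},\xi_{\max}]$, and $\xi_1$ ranges over $[\eta_{\min},\eta_{\max}]$ using the now-full triangle $\mathscr{D}^{R}_2$, the whole rectangle is reached.

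Instantaneous positivity at every $t>0$ follows because the gain term at the target point is strictly positive as soon as both source modes are positive. Positive terms propagate through finitely many stages in arbitrarily short time, while the loss term only multiplies $n$ by a bounded factor and so cannot make it vanish.

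The main obstacle is Stage 1: chords may not be single intervals and their projections may only overlap, which is where connectedness and the no-cut-point hypothesis must be used carefully. I would try first a compactness argument: cover $\mathscr{D}_0$ by finitely many small open rectangles, each meeting $\mathscr{D}_0$ in a set with nonempty interior, and chain them using connectedness, so that consecutive boxes have overlapping projections.
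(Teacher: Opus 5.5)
Your argument is sound in outline, but it fills the support by a different mechanism from the paper's. The paper fixes one vertical chord $V(a:b_1,b_2)\subset\mathscr{D}_0$ and lengthens it inside the rectangle. A short segment $V(c:b_1-\epsilon,b_1+\epsilon)\subset\mathscr{D}_0$ at another abscissa generates a small triangle $\mathscr{P}_\epsilon\subset\mathscr{D}^R_2$, and a triad pairing $\mathscr{P}_\epsilon$ with the chord extends the chord by $\epsilon$. Once the chord spans $[\eta_{\min},\eta_{\max}]$, a single application of Lemma~\ref{lem:dense_L} gives all of $\mathscr{D}^R_2$, and the same extension run at every interior abscissa also fills $\mathscr{D}^R_0$. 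You instead enlarge the triangle directly by merging sub-triangles, and this step is correct: for a target with $\eta\in[b_1,b_2]$, $\xi\in[b_1',b_2']$, the intermediate $\xi_1$ can be any point of $[\max(\eta,b_1'),\min(b_2,\xi)]$, which has positive length as soon as the bases overlap on an interval. You fill the rectangle only afterwards. Both versions rest on the same covering fact: the $\eta$-projections of vertical chords carrying positive $n$ must chain across $(\eta_{\min},\eta_{\max})$ with genuine overlaps. The paper's claim that its iteration terminates ``since $\epsilon^\ast$ is finite'' does not establish this, because the successive $\epsilon_k$ could be summable. So a compactness argument is needed in either version. Your version also gives an actual containment proof (triad closure of the union, hence a linear homogeneous equation for $n$ on the complement, hence Gronwall), where the paper only asserts that no new regions appear. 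It also addresses positivity at every $t>0$, which the paper does not.

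Two repairs are needed. First, in Stage 2 the orientation you state forces $\xi_1\ge\eta$ with $(\xi,\eta)\in\mathscr{D}_0$. It therefore only reaches the part of each vertical line above the section $\{\eta:(\xi,\eta)\in\mathscr{D}_0\}$. The part below needs the other orientation: $(\xi,\xi_1)\in\mathscr{D}_0$ and $(\xi_1,\eta)\in\mathscr{D}^R_2$ produce $(\xi,\eta)$ for $\eta_{\min}\le\eta\le\xi_1$. This is the orientation the paper uses in its own Step 2.

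Second, your box-chaining sketch does not give overlapping chord projections: a box meeting $\mathscr{D}_0$ in a set with interior may contain only a short chord. Apply compactness to the projection itself instead. Suppose $n(\cdot,0)>0$ on a connected open set $\Omega$ with $\overline{\Omega}=\mathscr{D}_0$. Then $\pi_\eta(\Omega)=(\eta_{\min},\eta_{\max})$ is covered by the open $\eta$-projections of short vertical segments in $\Omega$. Finitely many of these cover each $[\eta_{\min}+\delta,\eta_{\max}-\delta]$ and can be ordered into an overlapping chain. This also supplies the finitely many stages your positivity-in-time argument needs, and taking closures handles the endpoints. Note that connectedness of $\mathscr{D}_0$ alone does not make $\Omega$ connected. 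If the positivity set splits into pieces whose $\eta$-projections merely abut, merging triangles stalls at the junction. You then have to cross it through the rectangle, using the vertical-segment mechanism of Lemma~\ref{lem:dense_L} on a line of $\mathscr{D}^R_0$ that straddles the junction.
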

\begin{proof}

We prove the theorem in two steps. In the first step, we show how wave action spreads to the regions
$\mathbb{R}^{2}_{>} \setminus \mathscr{D}^{R}_{0}$ and $\mathscr{D}^{R}_{0} \setminus \mathscr{D}_{0}$.
In the second step, we use these results to iteratively establish the full statement of the theorem. Schematically the proof follows from Fig.~\ref{Fig:proof_sketch}: , $V(a: b_1,b_2)$ and $V(c:b_1-\epsilon,b_1+\epsilon)$, are line segments in $\mathscr{D}_0$, and corresponding regions excited by them are $\mathscr{P}_1$ and $\mathscr{P}_{\epsilon}$ from lemma ~\ref{lem:dense_L}. Considering a resonant triad (solid black dots in Fig.~\ref{Fig:proof_sketch}) such that one mode lies in $\mathscr{P}_{\epsilon}$ and the other in $V(a: b_1,b_2)$, implies the third mode must lie in $\mathscr{D}^{R}_{0} \setminus \mathscr{D}_{0}$. \\
Note that the variables $a,b,c$ introduced \textbf{are not the same} as the ones introduced in an earlier section.

\medskip

\noindent\textbf{Step 1: Spreading outside $\mathscr{D}^{R}_0$.}
Consider a vertical line segment $V(a:b_{1},b_{2}) \subset \mathscr{D}_{0}$. If there are two nonzero resonant modes in $V(a:b_{1},b_{2})$, then by Lemma~\ref{lem:dense_L} wave action spreads to the region
\[
\mathscr{P}_{1}
=
\left\{
(\xi,\eta)\in \mathbb{R}^{2}_{>}
\;\middle|\;
b_{1}\leq \eta \leq \xi \leq b_{2} < a
\right\}.
\]
Since $\eta_{\min} \leq b_{1} \leq \eta_{\max}$ and $\eta_{\min} \leq b_{2} \leq \eta_{\max}$, we have
\[
\mathscr{P}_{1} \subset \mathscr{D}^{R}_{2}
\subset \mathbb{R}^{2}_{>} \setminus \mathscr{D}^{R}_{0}.
\]

\medskip

\noindent\textbf{Step 2: Spreading to $\mathscr{D}^{R}_{0}\setminus\mathscr{D}_{0}$.}
Since $\mathscr{D}^{R}_{0}$ is the smallest axis-aligned rectangle enclosing $\mathscr{D}_{0}$, the following property holds.

\medskip

\noindent
(i) For any $\xi_{\min}<\xi<\xi_{\max}$, the vertical segment
$V(\xi:\eta_{\min},\eta_{\max})$ intersects $\mathscr{D}_{0}$ at least at two points
$(\xi,\eta_{-})$ and $(\xi,\eta_{+})$ lying on $\partial\mathscr{D}_{0}$ such that
\[
V(\xi:\eta_{-},\eta_{+}) \subset \mathscr{D}_{0}.
\]
Moreover, for any $\eta_{-}<\eta<\eta_{+}$, there exists $\epsilon>0$ such that
\[
V(\xi:\eta-\epsilon,\eta+\epsilon)\subset V(\xi:\eta_{-},\eta_{+}).
\]
An analogous statement holds for horizontal segments
$H(\eta:\xi_{\min},\xi_{\max})$ with $\eta_{\min}<\eta<\eta_{\max}$.

There are two exceptions to this property. First, the sides of the rectangle, such as
$V(\xi_{\max}:\eta_{\min},\eta_{\max})$ or
$V(\xi_{\min}:\eta_{\min},\eta_{\max})$, are tangent to $\partial\mathscr{D}_{0}$ and therefore do not contain any nontrivial subsegments lying entirely in $\mathscr{D}_{0}$. Second, cut points constitute another exception. Although the theorem remains valid in the presence of cut points, the proof requires minor modifications, which we address later.

\medskip

\noindent
(ii) For any $\xi_{\min}<\xi<\xi_{\max}$, the distance between the points
$(\xi,\eta_{\min})$ and $(\xi,\eta_{-})$ is finite, and
\[
V(\xi:\eta_{\min},\eta_{-}) \subset \mathscr{D}^{R}_{0}\setminus\mathscr{D}_{0}.
\]

\medskip

Now consider a vertical line segment $V(a:b_{1},b_{2}) \subset \mathscr{D}_{0}$ such that
$\{(a,b_{1}),(a,b_{2})\}\subset\partial\mathscr{D}_{0}$, with
\[
\eta_{\min}<b_{1}<\eta_{\max},
\qquad
\eta_{\min}<b_{2}<\eta_{\max}.
\]
By property (ii), the distance between $(a,b_{1})$ and $(a,\eta_{\min})$ is finite; denote it by $\epsilon^{\ast}>0$.
Furthermore, $H(b_{1}:\xi_{\min},\xi_{\max})\cap\mathscr{D}_{0}\neq\emptyset$, more precisely, the intersection consists of points where the horizontal line intersects the domain.
Let $(c,b_{1})\in\mathscr{D}_{0}\setminus\partial\mathscr{D}_{0}$ be such an intersection point.
There exists $\epsilon>0$ such that
\[
V(c:b_{1}-\epsilon,b_{1}+\epsilon)\subset\mathscr{D}_{0}.
\]
If two nonzero resonant modes lie in this segment, then by Lemma~\ref{lem:dense_L} wave action spreads to
\[
\mathscr{P}_{\epsilon}
=
\left\{
(\xi,\eta)\in\mathbb{R}^{2}_{>}
\;\middle|\;
b_{1}-\epsilon\leq\eta<\xi\leq b_{1}+\epsilon<c
\right\}
\subset\mathscr{D}^{R}_{2}.
\]
Now we show that if two modes of a triad lie in $V(a:b_1,b_2)$ and $\mathscr{P}_{\epsilon}$, then the third mode could lie in $\mathscr{D}^{R}_{0}\setminus\mathscr{D}_{0}$.
Consider a resonant triad $\{(a,\eta),(a,\xi_{1}),(\xi_{1},\eta)\}$ with
$(\xi_{1},\eta)\in\mathscr{P}_{\epsilon}$ and $(a,\xi_{1})\in V(a:b_{1},b_{2})$.
Then
\[
b_{1}-\epsilon\leq\eta\leq\xi_{1}\leq b_{1}+\epsilon,
\qquad
b_{1}\leq\xi_{1}\leq b_{2},
\]
which implies
\[
(a,\eta)\in V(a:b_{1}-\epsilon,b_{1}+\epsilon).
\]
  Since $(a,b_1)\in\partial\mathscr{D}_0$ we have
\[
V(a:b_{1}-\epsilon,b_{1})\setminus\{(a,b_1)\}\subset \mathscr{D}^{R}_{0}\setminus\mathscr{D}_{0}.
\]

\medskip
Repeating this argument starting from the horizontal segment
$H(b_{1}-\epsilon:\xi_{\min},\xi_{\max})$, we obtain an $\epsilon_{1}>0$ such that wave action spreads to
\[
\mathscr{P}_{\epsilon_{1}}
=
\left\{
(\xi,\eta)\in\mathbb{R}^{2}_{>}
\;\middle|\;
b_{1}-\epsilon-\epsilon_{1}\leq\eta\leq\xi\leq b_{1}-\epsilon+\epsilon_{1}\leq c_{2}
\right\}
\subset\mathscr{D}^{R}_{2},
\]
and consequently to
\[
V(a:b_{1}-\epsilon-\epsilon_{1},b_{1}-\epsilon)
\subset \mathscr{D}^{R}_{0}\setminus\mathscr{D}_{0}.
\]

Iterating this procedure, we obtain a decreasing sequence
\[
b_{1}>b_{1}-\epsilon>b_{1}-\epsilon-\epsilon_{1}>\cdots
\]
until $b_{1}^{\ast}\leq\eta_{\min}$. This occurs in finitely many steps since $\epsilon^{\ast}$ is finite. Hence, the wave action is non-zero for modes in $V(a:\eta_{\min},b_{1})\subset \mathscr{D}^{R}_{0}\setminus\mathscr{D}_{0}.$

An analogous argument starting from $(a,b_{2})$ results in the wave action spreading to $
V(a:b_{2},\eta_{\max})\subset \mathscr{D}^{R}_{0}\setminus\mathscr{D}_{0}$.

Therefore, after finitely many steps, wave action is nonzero on the entire segment
$V(a:\eta_{\min},\eta_{\max})$.

Choosing two resonant modes on this segment and applying Lemma~\ref{lem:dense_L}, we obtain
\[
\mathscr{P}
=
\left\{
(\xi,\eta)\in\mathbb{R}^{2}_{>}
\;\middle|\;
\eta_{\min}\leq\eta\leq\xi\leq\eta_{\max}<a
\right\}=\mathscr{D}^{R}_{2}.
\]
Therefore, the wave action spreads to all $\mathscr{D}^{R}_{2}$.
By symmetry, the same argument but starting with horizontal line segments results in non-zero wave action in $\mathscr{D}^{R}_{1}$. Now, once wave action spreads to the regions $\mathscr{D}^{R}_0$, $\mathscr{D}^{R}_1$, and $\mathscr{D}^{R}_2$ from Lemma~\ref{lem:D012} we know that wave action fills the region $\mathscr{D}^{R}_{0}\cup\mathscr{D}^{R}_{1}\cup\mathscr{D}^{R}_{2}$ and from the Theorem ~\ref{th:fsuppsquare} the wave action does not spread to new regions. Thus $t>0$ 
\[
\operatorname{supp} n(\cdot,t)
=
\mathscr{D}^{R}_{0}\cup\mathscr{D}^{R}_{1}\cup\mathscr{D}^{R}_{2}.
\]
\end{proof}
%%%%%%%%%%%%
%%%%%%%%%%%%%%%%
 \begin{figure}[!]
\centering
\includegraphics[width=\linewidth]{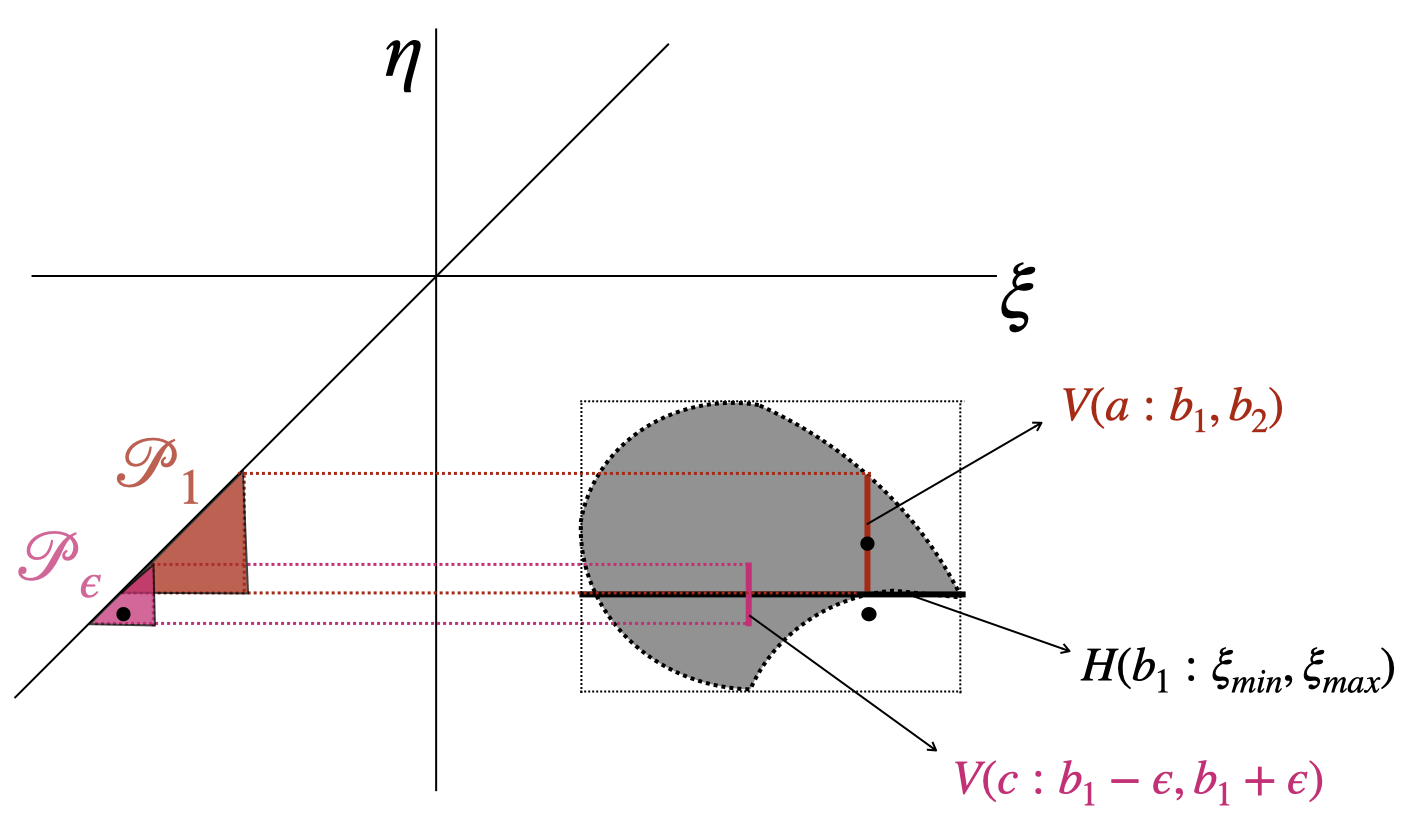}
\caption{ Schematic of a compact domain $\mathscr{D}_0$ (shaded gray), its minimal enclosing rectangle $\mathscr{D}_0^{R}$. If two modes of a resonant triad lie on the line $V(a:b_1,b_2)$ the third mode lies in $\mathscr{P}_1$. And similarly if two modes lie on the line $V(c:b_1-\epsilon,b_1+\epsilon)$ the third mode lies in $\mathscr{P}_{\epsilon}$. Where the point $(c,b_1)\in \mathscr{D}_0$ is a point of intersection between lines $H(b_1:\xi_{min},\xi_{\max})$ and $V(c:b_1-\epsilon,b_1+\epsilon)$.  The black dots show a representative resonance triad, note the triad is formed by a mode in $\mathscr{P}_{\epsilon}$ and $V(a:b_1,b_2)$ such that the third mode is in $\mathscr{D}^{R}_0\setminus\mathscr{D}_0$.
}
\label{Fig:proof_sketch}
\end{figure}
%%%%%%%%%%%%%
\begin{figure}[!]
\centering
\includegraphics[width=\linewidth]{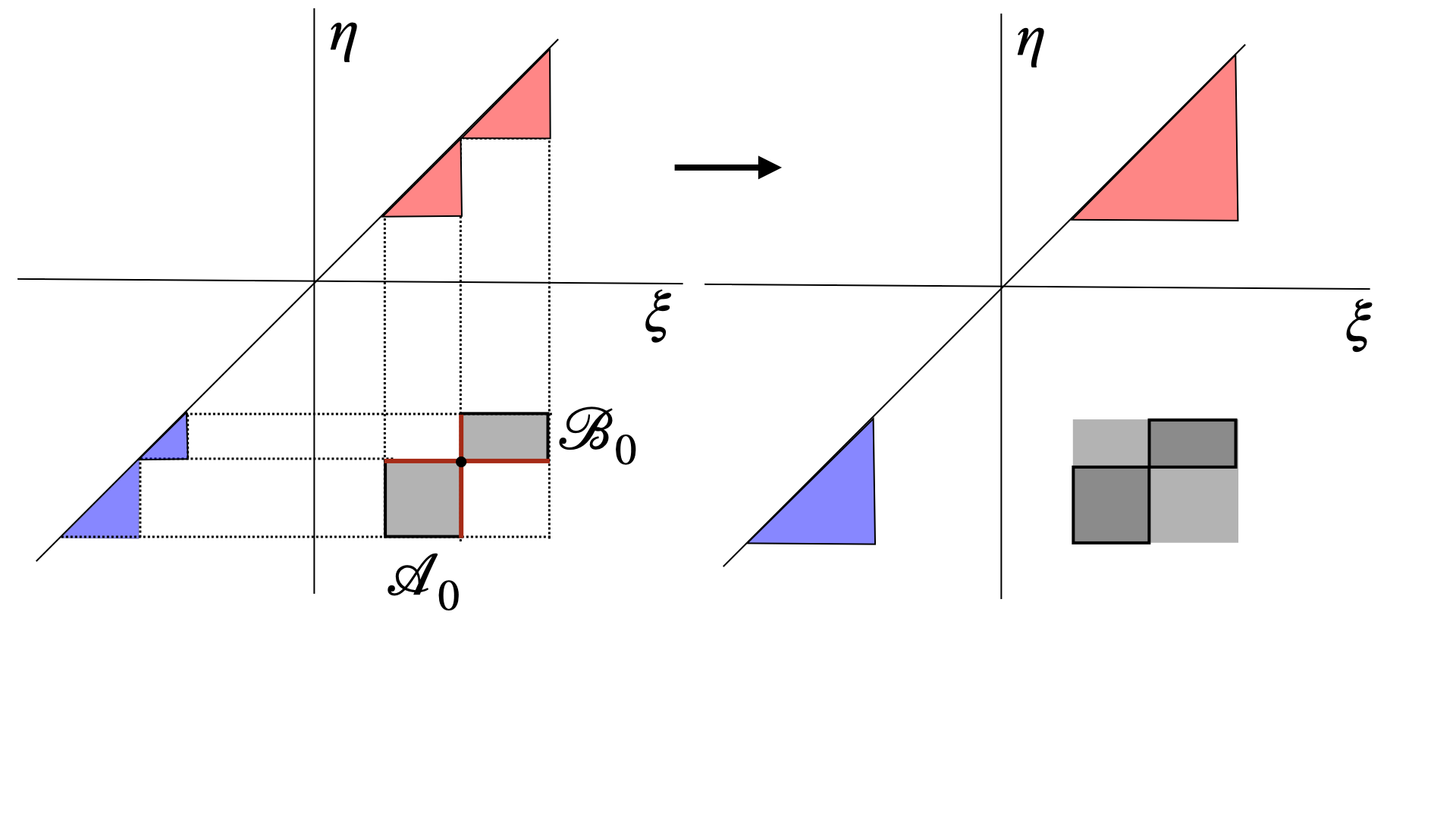}
\caption{A schematic illustrating how the wave action spreads to the final support when a cut-point is involved.}
\label{Fig:cutpoint}
\end{figure}

The above result also extends to domains containing a cut-point. For instance, in Fig.~\ref{Fig:cutpoint}, we consider a domain $\mathscr{D}_0$ with a cut-point. In the absence of this cut-point, $\mathscr{D}_0$ can be decomposed into two disjoint subdomains, $\mathscr{A}_0$ and $\mathscr{B}_0$. The rectangles shown correspond to their respective minimal enclosing rectangles.

Initially, the dynamics generate resonant interactions along the triangular regions and within the enclosing rectangles associated with $\mathscr{A}_0$ and $\mathscr{B}_0$ separately. However, the presence of the shared cut-point enables coupling between these subdomains: wave action spreads along the horizontal and vertical lines passing through the cut-point. As a result, and by virtue of Lemma~\ref{lem:dense_L}, the long-time support expands to the minimal enclosing rectangle corresponding to the union $\mathscr{A}_0^{R} \cup \mathscr{B}_0^{R}$.\\
\newpage
Now we extend the result to fully disconnected regions, see Fig.~\ref{Fig:disconnsup}.

\begin{theorem}
Let $n(\xi,\eta,t)$ be a solution of a wave kinetic equation in $\mathbb{R}^2_{>}$. 
Assume
\[
\operatorname{supp} n(\cdot,0)=\mathcal D_0=\mathcal A_0\cup\mathcal B_0,
\]
where $\mathcal A_0$ and $\mathcal B_0$ are compact subsets of $\mathbb{R}^2_{>}$. Suppose
\[
\pi_{\xi}(\mathcal A_0)\cap\pi_{\xi}(\mathcal B_0)=\varnothing,
\qquad
\pi_{\eta}(\mathcal A_0)\cap\pi_{\eta}(\mathcal B_0)=\varnothing ,
\]
where $\pi_{\xi}(x,y)=x$ and $\pi_{\eta}(x,y)=y$ are the horizontal and the vertical projections respectively.

Let $\mathcal A_0^R$ and $\mathcal B_0^R$ denote their smallest enclosing rectangles and 
$\mathscr{A}^{R}_1$, $\mathscr{A}^{R}_2$, $\mathscr{B}^{R}_1$, and $\mathscr{B}^{R}_2$ 
the corresponding triangular regions (see Definition~\ref{def:supports}).

Then for $t>0$,
\[
\operatorname{supp} n(\cdot,t)
=
(\mathscr{A}^{R}_{0}\cup \mathscr{A}^{R}_{1}\cup\mathscr{A}^{R}_{2})
\cup
(\mathscr{B}^{R}_{0}\cup \mathscr{B}^{R}_{1}\cup\mathscr{B}^{R}_{2}).
\]
Moreover, the evolution of $n(\xi,\eta,t)$ is decoupled in the $\mathcal A$ and $\mathcal B$ regions; that is, there are no resonant interactions coupling the two regions.
\end{theorem}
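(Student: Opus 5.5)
The plan is to reduce the statement to Theorem~\ref{th:compD0} (and Theorem~\ref{th:fsuppsquare}) applied to each piece separately. The extra ingredient is a \emph{decoupling} argument: no resonant triad can have nonzero-contributing modes in both families of supports. Once decoupling is established, the collision integral in Eq.~\eqref{Eq:wke_xieta} restricted to $\mathscr{S}_A:=\mathscr{A}^{R}_0\cup\mathscr{A}^{R}_1\cup\mathscr{A}^{R}_2$ only involves values of $n$ on $\mathscr{S}_A$, and likewise for $\mathscr{S}_B$. The WKE then splits into two independent WKEs with initial supports $\mathcal A_0$ and $\mathcal B_0$. Applying Theorem~\ref{th:compD0} to each (or its cut-point extension when a piece is not connected) gives the claimed union of supports. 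The same theorem also shows nothing is created outside $\mathscr{S}_A\cup\mathscr{S}_B$, by the argument of Theorem~\ref{th:fsuppsquare}.

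The key observation for decoupling is a "coordinate-value" bookkeeping. In a triad $\{(\xi,\eta),(\xi,\xi_1),(\xi_1,\eta)\}$, any two of the three modes together already contain all three numbers $\xi,\xi_1,\eta$ among their coordinates. Moreover $\xi_1$ enters once as an $\eta$-coordinate and once as a $\xi$-coordinate. For a piece $\mathcal A_0$ set $I_A=[\xi^A_{\min},\xi^A_{\max}]$ and $J_A=[\eta^A_{\min},\eta^A_{\max}]$. By the definitions of $\mathscr{A}^R_{0,1,2}$, every mode of $\mathscr{S}_A$ has both coordinates in $I_A\cup J_A$: $\xi$-coordinates lie in $I_A\cup J_A$, and $\eta$-coordinates lie in $J_A\cup I_A$. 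I would then show the following. If one mode lies in $\mathscr{S}_A$ and another in $\mathscr{S}_B$, the two modes share a coordinate value, so $(I_A\cup J_A)\cap(I_B\cup J_B)\neq\varnothing$. Hence the disjointness of these value sets forbids any mixed triad. It also forbids a triad with two modes in $\mathscr{S}_A$ and the third (nonzero) in $\mathscr{S}_B$: the third mode's coordinates are among $\xi,\xi_1,\eta$, which already lie in $I_A\cup J_A$. Finally, a triad with two modes in $\mathscr{S}_A$ has its third mode in $\mathscr{S}_A$ by Lemma~\ref{lem:D012}. This case analysis over the three configurations, and over the branch \eqref{Eq:rmsolutions1} by the $1\leftrightarrow2$ symmetry, is routine.

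The main obstacle is that the stated hypothesis, $\pi_\xi(\mathcal A_0)\cap\pi_\xi(\mathcal B_0)=\varnothing$ and $\pi_\eta(\mathcal A_0)\cap\pi_\eta(\mathcal B_0)=\varnothing$, does not literally give disjointness of $I_A\cup J_A$ and $I_B\cup J_B$. It fails in two ways. First, projections of compact sets may interleave, so their convex hulls $I_A,I_B$ can overlap. Second, and more seriously, coupling can go through the "mixed" coordinate $\xi_1$ whenever $J_A\cap I_B\neq\varnothing$ or $I_A\cap J_B\neq\varnothing$. Indeed, a mode $(\xi,\xi_1)$ near $\mathcal B$ in the $\xi$-direction could resonate with $(\xi_1,\eta)$ in $\mathscr{A}^R_1$. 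I would therefore first make precise, as in Fig.~\ref{fig1:subfig2}, that the hypothesis is to be read as disjointness of the enclosing intervals, namely $(I_A\cup J_A)\cap(I_B\cup J_B)=\varnothing$. I would check with a small explicit triad that without this the conclusion can fail, and the overlapping case is then the one of Fig.~\ref{fig1:subfig3}. Under this reading the coordinate-value argument closes the proof.
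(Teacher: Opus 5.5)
Your proposal is correct. Its architecture matches the paper's: rule out resonant triads linking the $\mathscr{A}$- and $\mathscr{B}$-families, use Lemma~\ref{lem:D012} for closure of each family, and apply the single-domain theorem to each piece to get the filling. The two differ in the decoupling step, and there your version is the valid one.

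The paper assumes, ``without loss of generality'', that $\xi_{\max}<\tilde\xi_{\min}$ and $\eta_{\max}<\tilde\eta_{\min}$. It then asserts that every point of the $\mathscr{A}$-regions has both coordinates strictly below those of every point of the $\mathscr{B}$-regions. That ordering is not without loss of generality, since one piece may lie to the right in $\xi$ but below in $\eta$. Even when the ordering holds, the assertion needs $\xi_{\max}<\tilde\eta_{\min}$. This is because the triangles make the coordinate values in the $\mathscr{A}$-regions fill $[\eta_{\min},\eta_{\max}]\cup[\xi_{\min},\xi_{\max}]$, and likewise for $\mathscr{B}$.

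Your suspicion that the literal hypothesis is insufficient is right. Let $\mathcal A_0$ and $\mathcal B_0$ be closed disks of radius $1/2$ centred at $(1,-3)$ and $(4,1)$. All projections are disjoint and the paper's ordering holds, yet $\mathscr{A}^{R}_1=\mathscr{B}^{R}_2$. The triad with $\xi=4$, $\xi_1=1$, $\eta=-3$ has $(1,-3)\in\mathcal A_0$ and $(4,1)\in\mathcal B_0$. The first integral in \eqref{Eq:wke_xieta} therefore gives $\partial_t n(4,-3)>0$ at $t=0$, although $(4,-3)$ lies in none of the six regions. So the theorem is false as stated. Your shared-coordinate argument under $(I_A\cup J_A)\cap(I_B\cup J_B)=\varnothing$ is exactly the repair. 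It also covers the oppositely oriented configurations that the paper's ordering silently excludes.

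Three smaller corrections:
\begin{enumerate}
\item Your parenthetical ``or its cut-point extension when a piece is not connected'' does not work. The cut-point discussion concerns connected domains. If $\mathcal A_0$ is itself disconnected, $\mathscr{A}^{R}_0$ need not fill, because its components can decouple by your own argument. You should assume each of $\mathcal A_0$, $\mathcal B_0$ is connected, as the paper implicitly does when it invokes the single-domain theorem.
\item Each piece must also satisfy $\xi_{\min}>\eta_{\max}$. This is the condition $a>d$ of Definition~\ref{def:supports}, on which Lemma~\ref{lem:D012} relies.
\item The failure mode $I_A\cap J_B\neq\varnothing$ is not the overlapping-$\eta$-projection case of Fig.~\ref{fig1:subfig3}. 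It is a separate case the paper never treats, so it should be excluded by hypothesis rather than referred there.
\end{enumerate}
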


\begin{proof}
By definition of the enclosing rectangles,
\[
\pi_{\xi}(\mathcal A_0)=\pi_{\xi}(\mathcal A^{R}_0),
\quad
\pi_{\eta}(\mathcal A_0)=\pi_{\eta}(\mathcal A^{R}_0),
\]
\[
\pi_{\xi}(\mathcal B_0)=\pi_{\xi}(\mathcal B^{R}_0),
\quad
\pi_{\eta}(\mathcal B_0)=\pi_{\eta}(\mathcal B^{R}_0).
\]

Write
\[
\mathscr{A}^{R}_0
=
\{(\xi,\eta)\in\mathbb{R}^2_{>}
\mid
\xi_{\min} \leq \xi \leq \xi_{\max},\;
\eta_{\min} \leq \eta \leq \eta_{\max}\},
\]
\[
\mathscr{B}^{R}_0
=
\{(\xi,\eta)\in\mathbb{R}^2_{>}
\mid
\tilde\xi_{\min} \leq \xi \leq \tilde\xi_{\max},\;
\tilde\eta_{\min} \leq \eta \leq \tilde\eta_{\max}\}.
\]

Since the projections are disjoint in both coordinate directions, we may assume (without loss of generality) the ordering
\[
\xi_{\min}<\xi_{\max}<\tilde\xi_{\min}<\tilde\xi_{\max},
\]
\[
\eta_{\min}<\eta_{\max}<\tilde\eta_{\min}<\tilde\eta_{\max}.
\]

By Lemma~\ref{lem:D012}, if two modes of a resonant triad
$\{(\xi,\eta),(\xi,\xi_1),(\xi_1,\eta)\}$ belong to
$\mathscr{A}^{R}_{0}\cup \mathscr{A}^{R}_{1}\cup\mathscr{A}^{R}_{2}$,
then the third mode also belongs to this same region.
An analogous statement holds for
$\mathscr{B}^{R}_{0}\cup \mathscr{B}^{R}_{1}\cup\mathscr{B}^{R}_{2}$.

We now show that no resonant triad can couple the $\mathcal A$ and $\mathcal B$ regions.
Any potential coupling would require, for example,
\[
(\xi_1,\eta)\in \mathscr{A}^{R}_{0}\cup \mathscr{A}^{R}_{1}\cup\mathscr{A}^{R}_{2},
\qquad
(\xi,\xi_1)\in \mathscr{B}^{R}_{0}\cup \mathscr{B}^{R}_{1}\cup\mathscr{B}^{R}_{2}.
\]
However, this is impossible due to the strict separation
\[
\xi_{\max}<\tilde\xi_{\min},
\qquad
\eta_{\max}<\tilde\eta_{\min}.
\]
Indeed, any point in an $\mathcal A$-region has both $\xi$- and $\eta$-coordinates strictly smaller than those of any point in a $\mathcal B$-region. Therefore, the resonance relations cannot be simultaneously satisfied with modes drawn from both regions. Hence, resonant interactions between $\mathcal A$ and $\mathcal B$ are ruled out.

Finally, by Theorem~\ref{th:compD0}, wave action starting from $\mathcal A_0$
spreads to
$\mathscr{A}^{R}_{0}\cup \mathscr{A}^{R}_{1}\cup\mathscr{A}^{R}_{2}$.
Similarly, wave action starting from $\mathcal B_0$
spreads to
$\mathscr{B}^{R}_{0}\cup \mathscr{B}^{R}_{1}\cup\mathscr{B}^{R}_{2}$.

Since the dynamics in the two regions are decoupled, it follows that for $t>0$
\[
\operatorname{supp} n(\cdot,t)
=
(\mathscr{A}^{R}_{0}\cup \mathscr{A}^{R}_{1}\cup\mathscr{A}^{R}_{2})
\cup
(\mathscr{B}^{R}_{0}\cup \mathscr{B}^{R}_{1}\cup\mathscr{B}^{R}_{2}).
\]
\end{proof}
\begin{figure}[!]
\centering
\includegraphics[width=\linewidth]{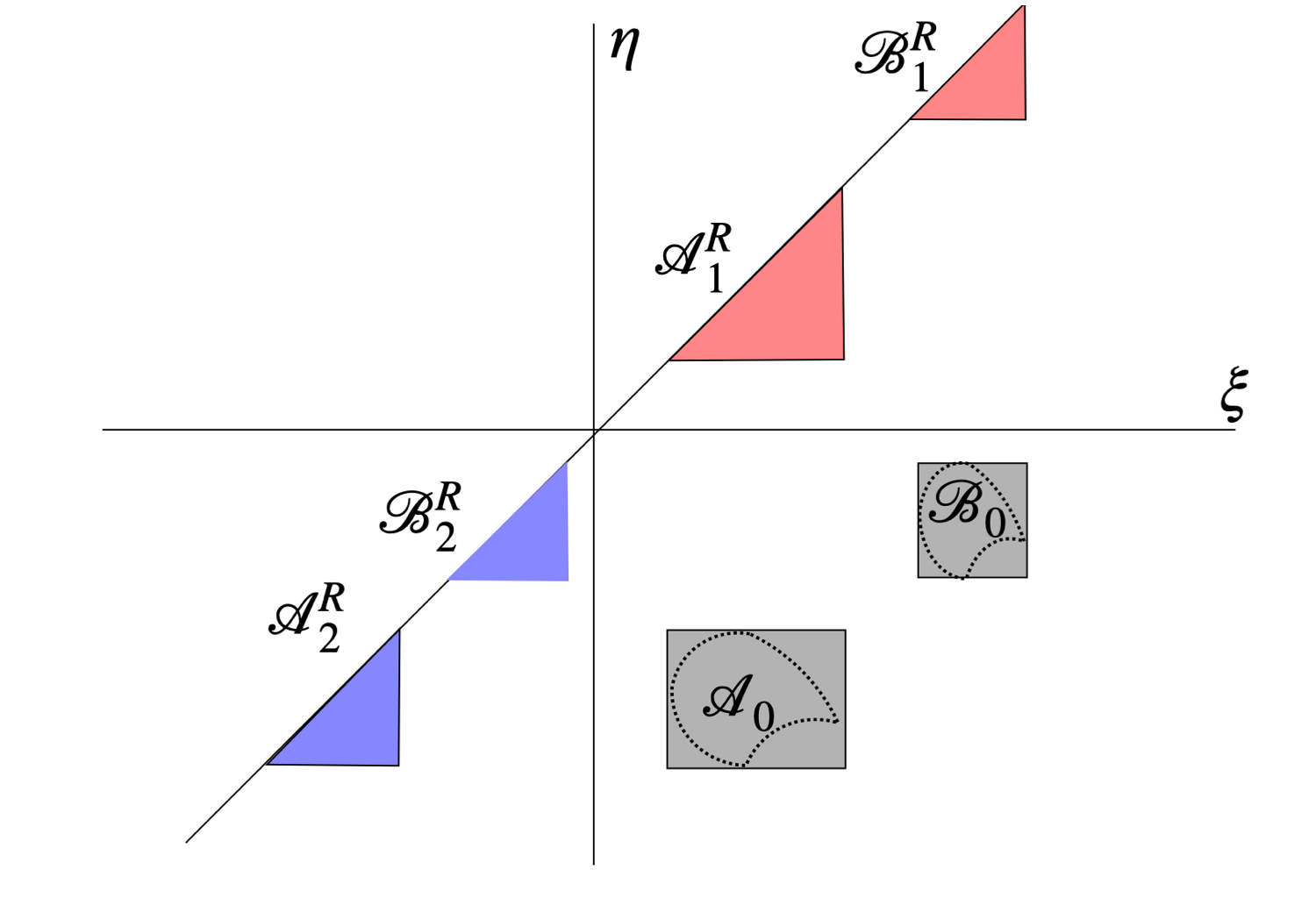}
\caption{A schematic illustrating how the wave action spreads to the final support when the sets are disjointed without any overlapping projections.}
\label{Fig:disconnsup}
\end{figure}
Finally, even when the initial compact support is disconnected, but its constituent components have overlapping projections along either the $\xi$- or $\eta$-direction, the support remains finite for all $t>0$. For instance, consider the schematic in Fig.~\ref{Fig:overlap_proj}, where the initial domain consists of two disconnected compact sets, $\mathscr{A}_0 \cup \mathscr{B}_0$. These sets have overlapping projections along the $\eta$-direction, i.e.,:
\[
\pi_{\eta}(\mathcal A_0)\cap\pi_{\eta}(\mathcal B_0)\neq\varnothing\]
The resulting long-time support remains bounded, as illustrated by the shaded region in Fig.~\ref{Fig:overlap_proj}. However, the structure of the support must be redefined due to the emergence of additional secondary resonances.
First, the rectangles enclosing $\mathscr{A}_0$ and $\mathscr{B}_0$ are no longer minimal enclosing rectangles in the previous sense (the dashed black lines demarcate the minimal enclosing rectangles from the redefined regions); instead, they extend equally in the $\eta$-direction, satisfying
\[
\pi_{\eta}(\mathscr{A}^{R}_0)= \pi_{\eta}(\mathscr{B}^{R}_0)=\pi_{\eta}(\mathscr{A}_0)\cup\pi_{\eta}(\mathscr{B}_0).
\]
Second, coupling between the rectangles $\mathscr{A}^{R}_0$ and $\mathscr{B}^{R}_0$ generates an additional rectangular region, denoted by $\mathscr{C}^{R}$. For example, the black dots in Fig.~\ref{Fig:overlap_proj} represent a resonant triad that facilitates the transfer of wave action into the region $\mathscr{C}^{R}$. In the next section a numerical simulation showing the same is plotted.
%%%%%%%%%%%%%%

%%%%%%%%%%%%%%
\begin{figure}[!]
\centering
\includegraphics[width=\linewidth]{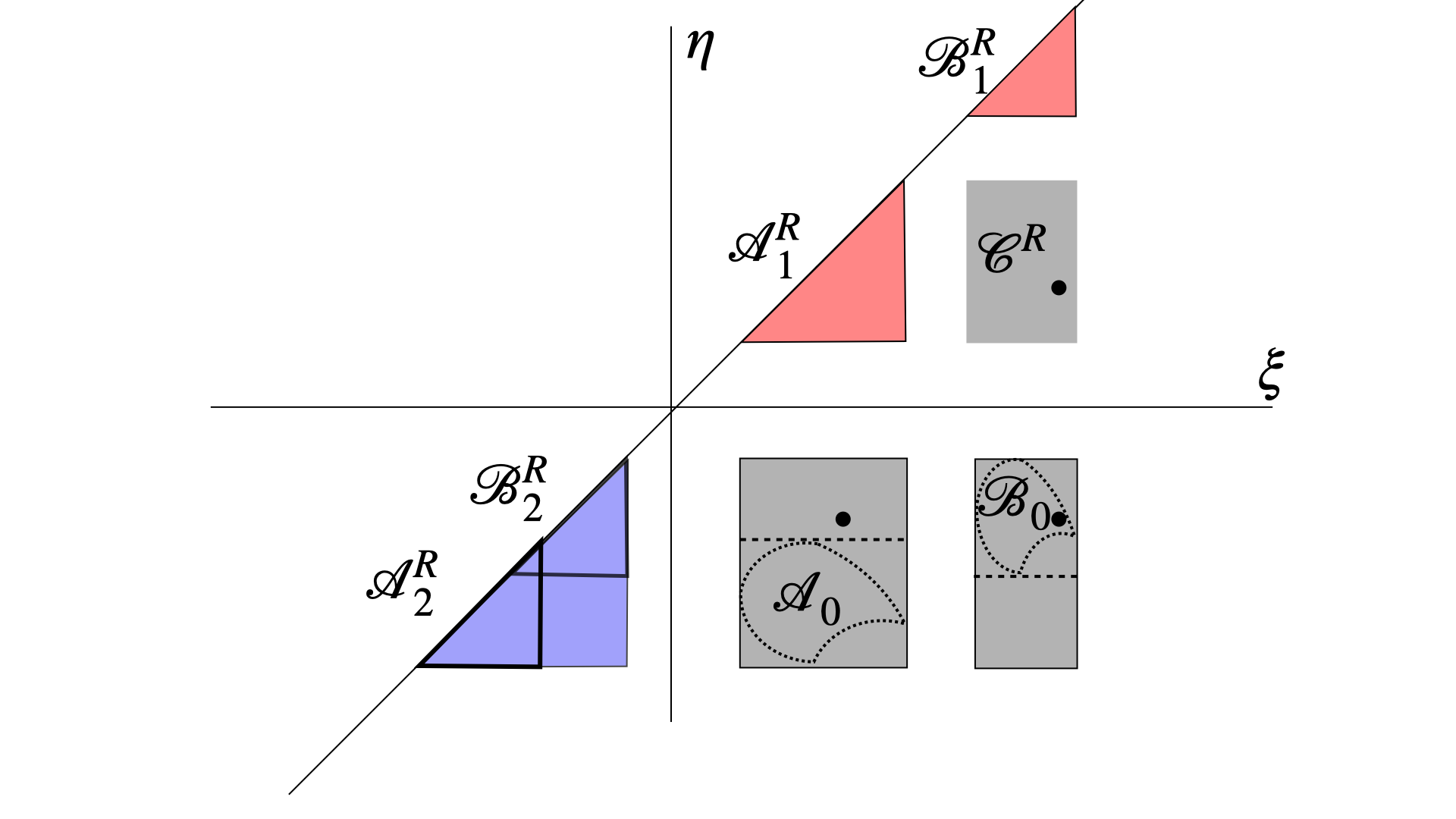}
\caption{A schematic illustrating how the wave action spreads to the final support when the initial disconnected compact domain has no overlapping projections amongst disconnected regions.}
\label{Fig:overlap_proj}
\end{figure}

\section{Additional numerical results}\label{methods:add_sim}
\begin{itemize}
    \item 
In Fig.~\ref{Fig:evolRunA} we plot the temporal evolution of $n(\xi,\eta)$ for Run A from the maintext.
\item 
In Fig.~\ref{Fig:fitData}, we reconstruct a fitted spectrum $\hat{n}(\xi,\eta)$ using data along a single horizontal line within $\mathscr{D}_0^{R}$, as described in the Methods section. An arbitrary choice of $\eta_0$ is sufficient, and the procedure is valid for any $-b < \eta_0 < -a$.
We find excellent agreement between the reconstructed fit and the simulation, with the relative error remaining at most $\mathcal{O}(10^{-12})$ for Run A and $\mathcal{O}(10^{-11})$ for Run B. For Run A, the fit is constructed using the wave-action spectrum at $t/\tau = 3 \times 10^{5}$, while for Run B, we use data at $t/\tau = 1.1 \times 10^{5}$.
\item 
For Run A spectrum, we take again a horizontal line within $\mathscr{D}_0^{R}$ and fit $n^{-1}(\xi,\eta_0)$ with a polynomial which has odd powers in $\xi$ or equivalently we take:
\[
F(x)=\sum^{i=N}_{i=0}c_ix^{2i-1}
\]
Using the fit with with $N=10$, we construct a $\hat{n}_p(\xi,\eta)=1/(F(\xi)-F(\eta))$. In Fig.~\ref{Fig:fitPoly} we plot the fit, and note that a maximum relative error of $\sim\mathcal{O}(10^{-2})$. Thus, even though it is attractive to have a low-dimensional polynomial approximation in place of the original functional fit, it comes at a dramatic loss of accuracy.
\item 
In Fig~\ref{Fig:evolRunC} we show numerical results for Run C such that the support of initial condition is given by $\mathscr{D}_0=\mathscr{A}_0\cup\mathscr{B}_0$, where:
\[
\mathscr{A}_0=\{(\xi,\eta):(\xi-2)^2 + (\eta+3)^2 \leq 0.75^2\},~~~
\mathscr{B}_0=\{(\xi,\eta):(\xi-4)^2 + (\eta+2)^2 \leq 0.75^2\}.\\
\]
It's clear that $\pi_{\xi}(\mathcal A_0)\cap\pi_{\xi}(\mathcal B_0)=\varnothing$ and $\pi_{\eta}(\mathcal A_0)\cap\pi_{\xi}(\mathcal B_0)\neq\varnothing$. This precisely corresponds to the scenario described by Fig.~\ref{Fig:overlap_proj}. For initial conditions we choose 
\[
n(\xi,\eta,0)=n_A+n_B,
\]
where,
\[
n_A(\xi,\eta)= \exp\!\left(-\frac{(\xi-2)^2}{0.75^2}-\frac{(\eta+3)^2}{0.75^2}\right), \quad (\xi,\eta)\in \mathscr{A}_0,~~~~
n_B(\xi,\eta)= \exp\!\left(-\frac{(\xi-4)^2}{0.75^2}-\frac{(\eta+2)^2}{0.75^2}\right), \quad (\xi,\eta)\in \mathscr{B}_0.\\
\]
In Fig.~\ref{Fig:thermalizationRunC} we plot $\mathscr{N}_{\infty}$ and $\mathcal{S}$ versus $t/\tau_{nl}$. We note that the system thermalizes like Runs A and B.
\end{itemize}
\begin{figure}[!]
\centering
\includegraphics[width=\linewidth]{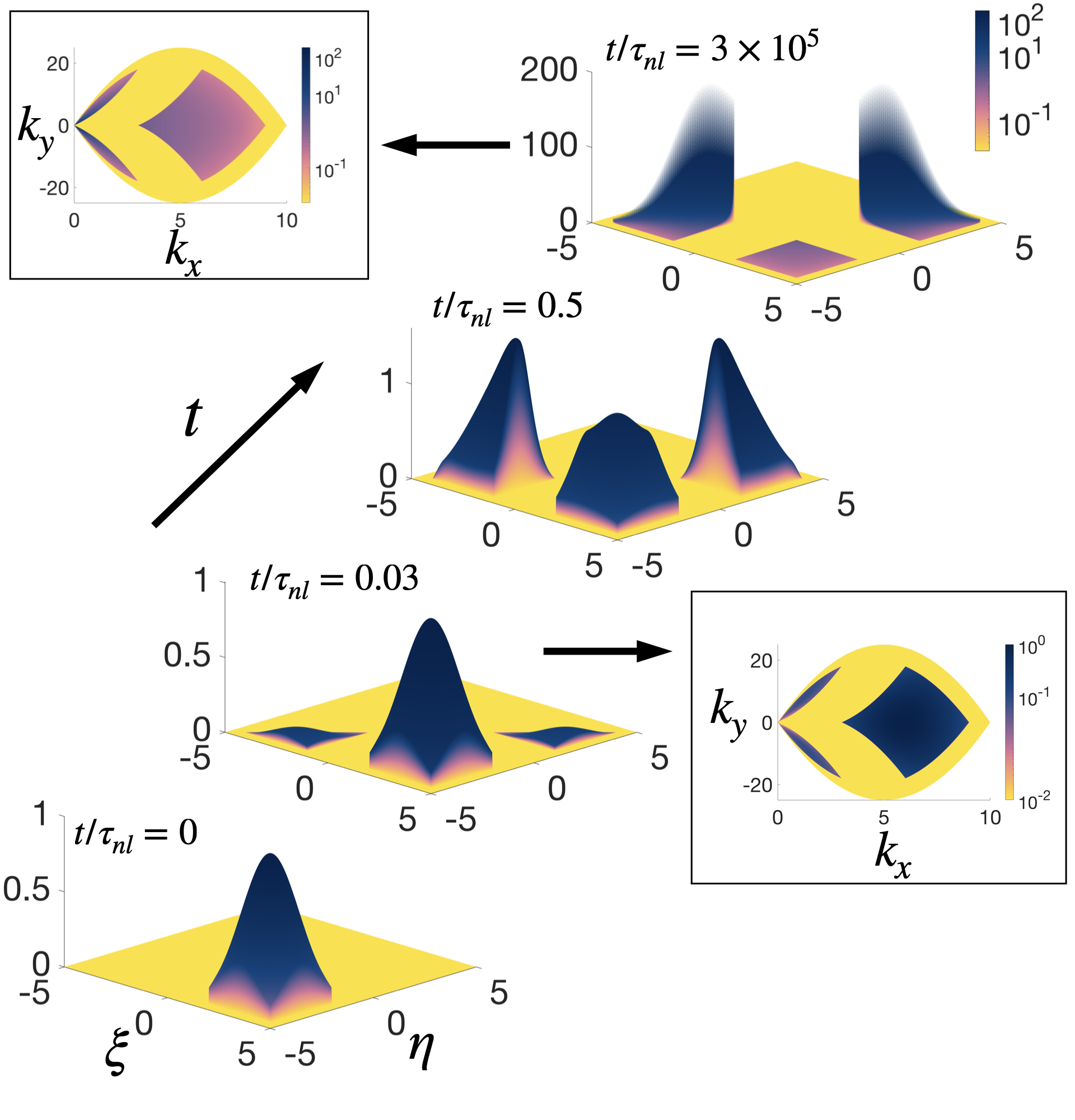}
\caption{Temporal evolution of $n(\xi,\eta)$ for Run A. In the insets we plot the spectrum in the $(k_x,k_y)$ co-ordinates.}
\label{Fig:evolRunA}
\end{figure}
\begin{figure}[!]
\centering
\includegraphics[width=\linewidth]{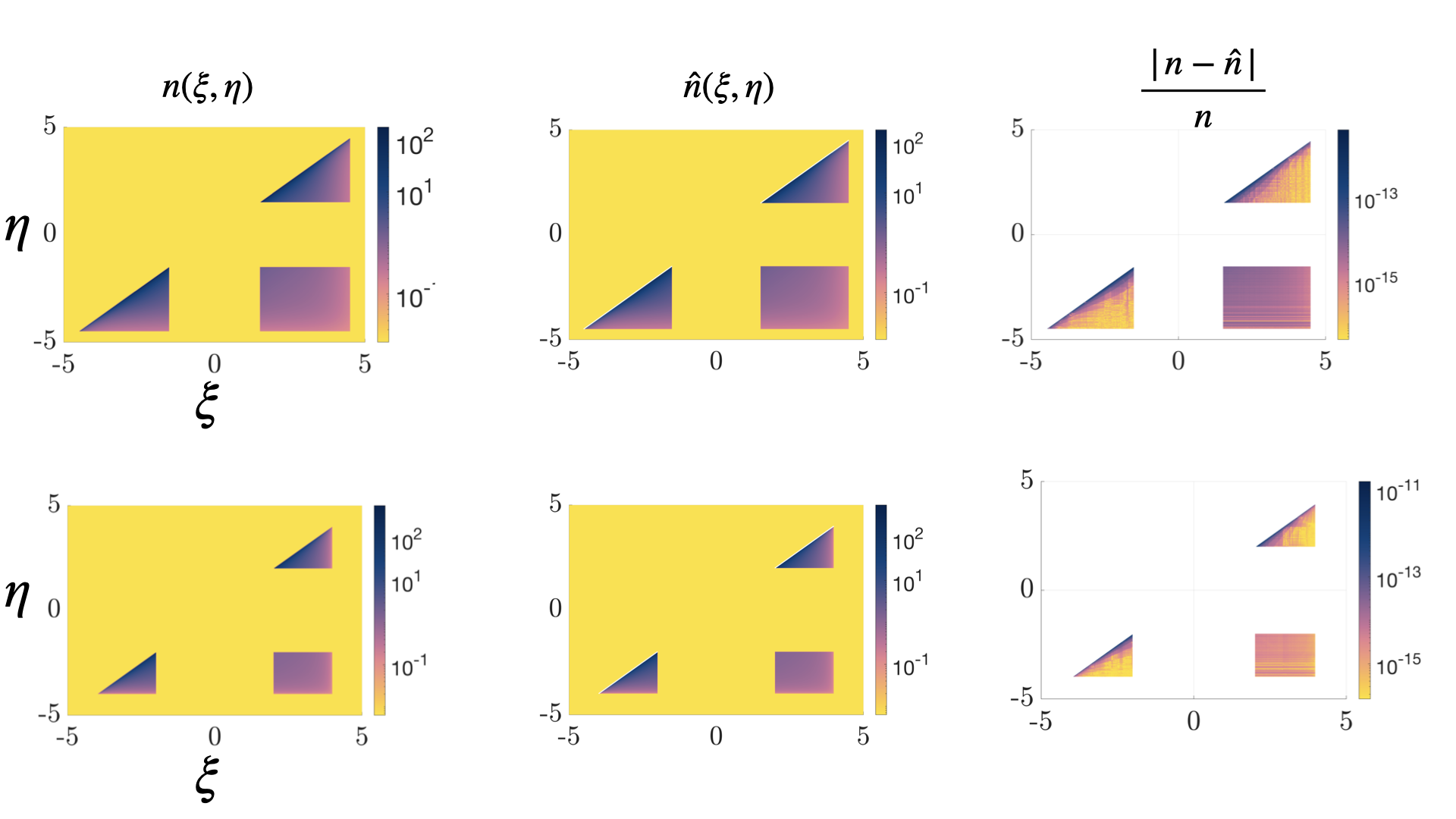}
\caption{Columns (I–III) show, respectively, the wave-action spectrum obtained from simulations at long times $(\tau/\tau_{nl} \sim \mathcal{O}(10^{5})$, the fitted spectrum $\hat{n}(\xi,\eta)$, and the relative error between the simulation and the fit. Rows (I–II) correspond to Run A and Run B, respectively.}
\label{Fig:fitData}
\end{figure}
\begin{figure}[!]
\centering
\includegraphics[width=\linewidth]{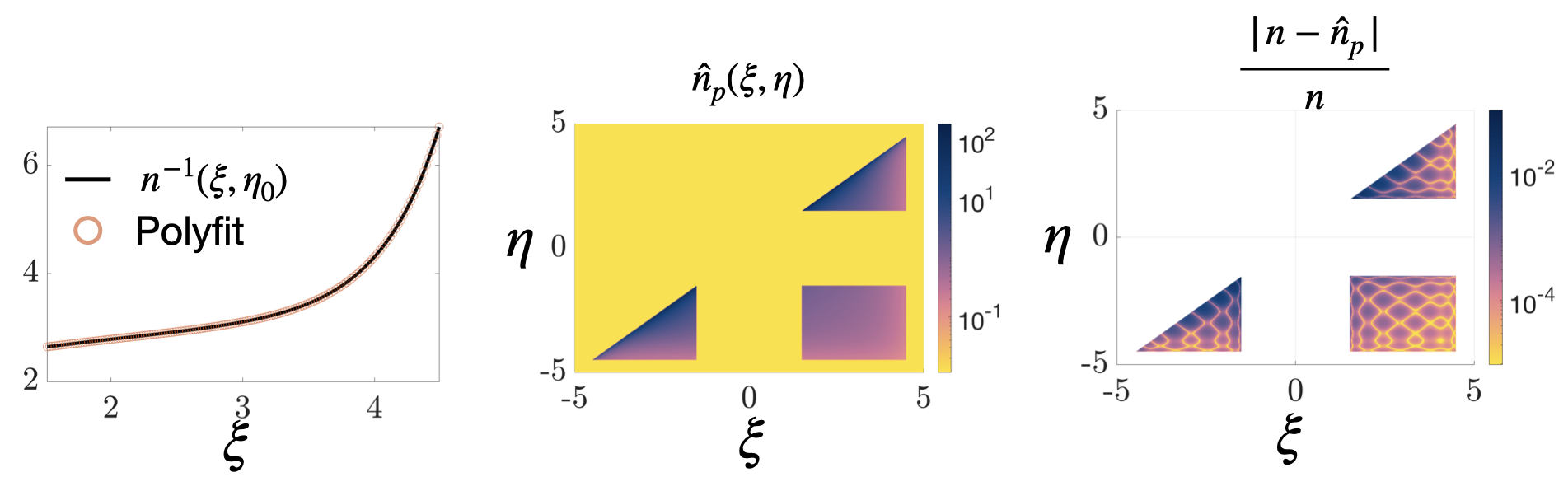}
\caption{For Run A Columns (I–III) show, respectively, the horizontal slice  $n^{-1}(\xi,\eta_0)$ versus $\xi$ and the corresponding fit, wave-action spectrum $\hat{n}_p(\xi,\eta)$ obtained from a polynomial fit,  and the relative error between the simulation and the fit. }
\label{Fig:fitPoly}
\end{figure}
\begin{figure}[!]
\centering
\includegraphics[width=\linewidth]{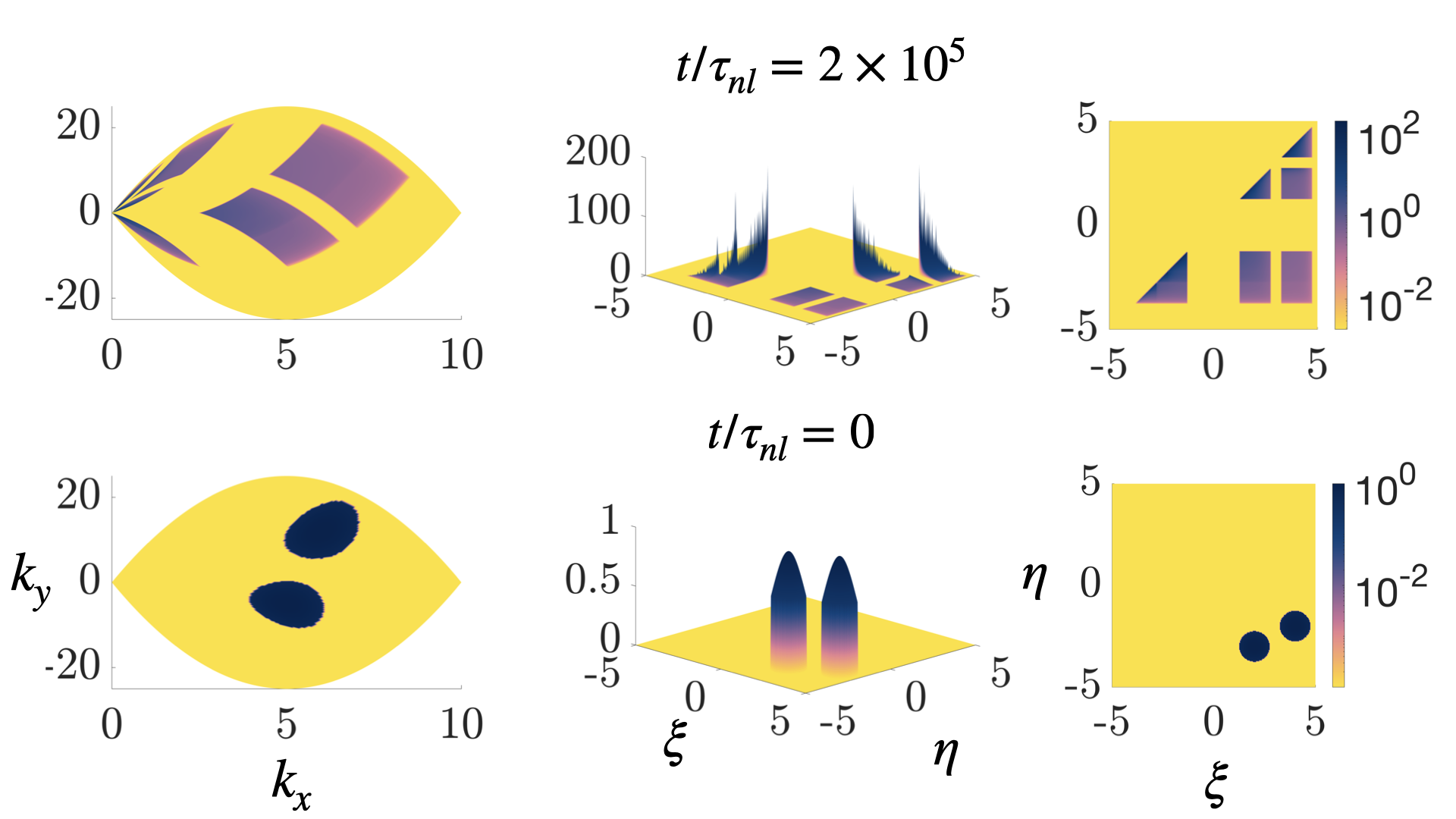}
\caption{Spectrum evolution in Run C. Left column: 2D color plot of $n(k_x,k_y)$. Center and right columns:
spectrum in the 
$(\xi,\eta)$ space -- the landscape and the 2D plots respectively.}
\label{Fig:evolRunC}
\end{figure}
\begin{figure}[!]
\centering
\includegraphics[width=\linewidth]{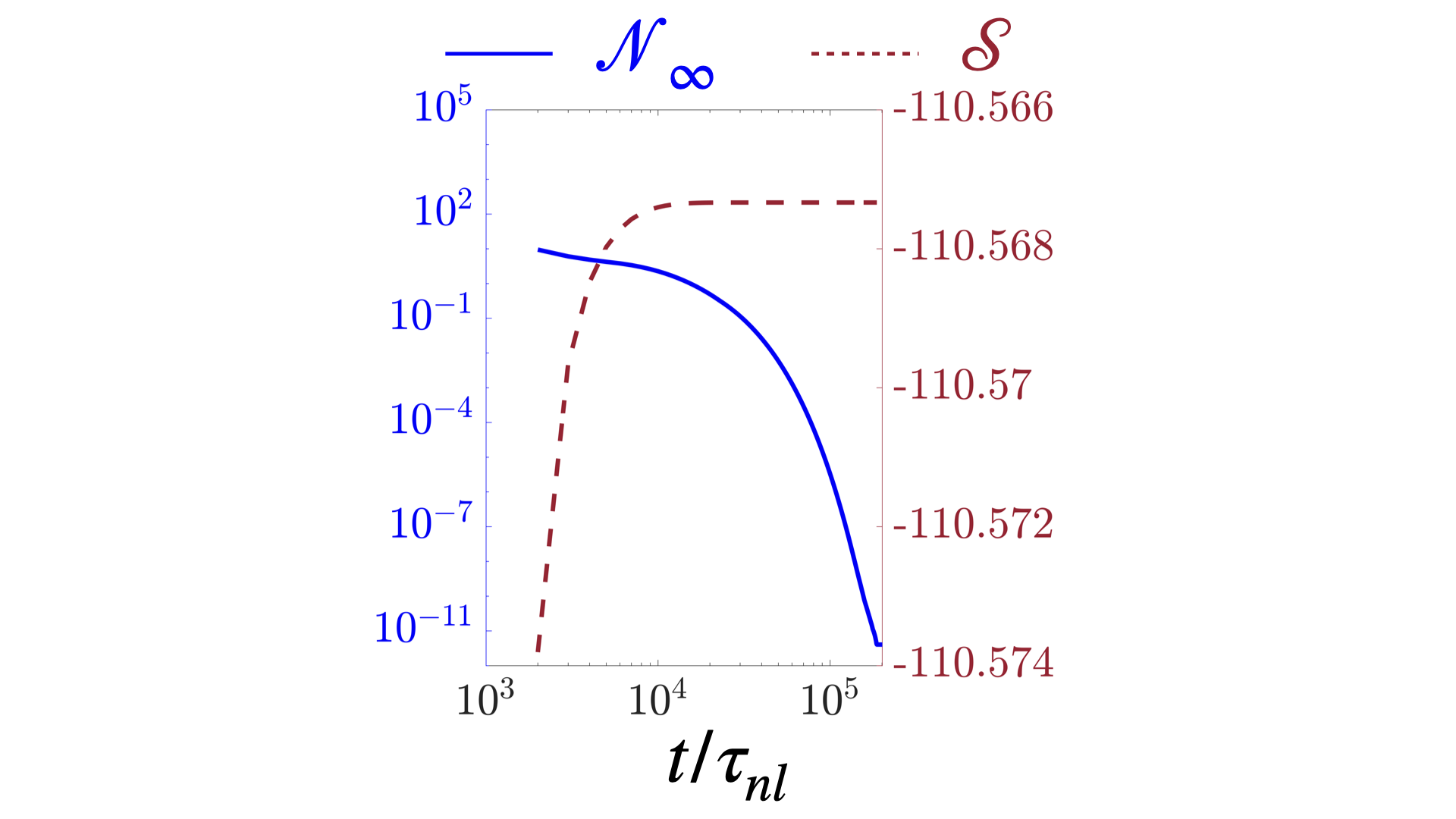}
\caption{Evolution of bracket ${\cal N}_\infty$ (solid line) and entropy $\mathcal{S}$ (dashed line).}
\label{Fig:thermalizationRunC}
\end{figure}

\end{document}